\documentclass[11pt]{article}

\usepackage[margin=1in]{geometry}
\usepackage{amsmath, amssymb, amsthm}
\usepackage{booktabs}
\usepackage{siunitx}
\usepackage{bm}
\usepackage{graphicx}
\usepackage{float}
\usepackage{subcaption}  
\usepackage{placeins}
\usepackage{setspace}
\usepackage{natbib}  
\usepackage{hyperref}
\usepackage{tcolorbox}
\usepackage{booktabs}
\usepackage{multirow}
\usepackage{placeins}
\usepackage{float}

\hypersetup{
  colorlinks=true,
  linkcolor=blue,
  citecolor=blue,
  urlcolor=blue
}

\newtheorem{theorem}{Theorem}
\newtheorem{corollary}{Corollary}
\newtheorem{lemma}{Lemma}
\newtheorem{proposition}{Proposition}
\theoremstyle{remark}

\title{\textbf{Model-based Survey Weighting Using Hierarchical Gaussian Processes for Structured Populations}}
\author{Seonghun Lee\\
Columbia University}

\date{October 13 2026}

\begin{document}
\maketitle

\addcontentsline{toc}{section}{Abstract}
\tableofcontents
\newpage

\noindent
\textit{Abstract.} Regression-based survey weighting provides a connection between model-based population estimation and design-based inference through equivalent weights. For nonlinear parametric models, these weights can be defined locally through the sensitivity of a poststratified population estimator to the observed outcomes. However, conventional regression models may inadequately capture nonlinear relationships and complex interactions across sparse poststratification cells, motivating a more flexible approach. Building on this local-equivalence perspective, this paper develops a survey-weighting framework based on hierarchical Gaussian process (HGP) regression. The model combines structured group-level effects with a smooth kernel-based component, allowing information to be shared across related poststratification cells while accommodating nonlinear variation. We derive Gaussian posterior prediction formulas and exploit the structured covariance representation to reduce computational costs. We establish a uniform convergence result for the posterior predictive mean under stated regularity conditions and use this result to support the theoretical analysis of locally equivalent weights, which are derived through a first-order expansion of the population poststratified estimator around the fitted latent mean surface. The resulting framework connects Bayesian nonparametric prediction with an interpretable, derivative-based representation of survey weights. Simulations show that HGP achieves lower population-weighted prediction error under nonparametric structure, while logistic regression performs well under correct parametric specification. Additional analyses examine sensitivity to sample size and frame coverage, subgroup prediction, and population-mean agreement. An application to the Future of Families and Child Wellbeing Study illustrates the method. Overall, this work extends local equivalent-weight construction to nonparametric modeling, providing a theoretical and computational foundation for survey weighting in structured, sparse populations.

\doublespacing

\section{Introduction}

Survey weighting is a fundamental tool for producing population-representative estimates in the presence of unequal sampling probabilities, nonresponse, and undercoverage. In our previous work, we developed a model-based framework for constructing survey weights using logistic regression (Lee, \citeyear{Lee2026}). While linear regression admits an exact weighted-mean representation, we constructed locally equivalent weights for logistic regression through a first-order expansion of the nonlinear population estimator. This framework supports the study of effective sample size, asymptotic linearity, and variance estimation. However, logistic regression imposes a linear predictor on the log-odds scale and may be restrictive when relationships involve complex interactions or nonlinear variation across structured populations. The present paper extends this local-equivalence perspective to hierarchical Gaussian process models, enabling locally equivalent-weight construction under more flexible, nonparametric representations of population structure.

In this paper, we extend this line of work by developing a more flexible framework for model-based survey weighting using hierarchical Gaussian processes (HGPs). HGPs accommodate nonlinear relationships while borrowing strength across related groups through hierarchical structure and kernel-based smoothing. This flexibility is particularly valuable in survey settings where small subgroup sizes and sparsely observed poststratification cells can limit the reliability of conventional regression and poststratification methods (Little, \citeyear{Little1993}; Gelman and Little, \citeyear{GelmanLittle1997}; Park et al., \citeyear{ParkGelmanBafumi2004}; Si et al., \citeyear{SiTrangucciGabryGelman2020}). More broadly, our goal is to extend the local-equivalence perspective beyond conventional parametric regression and investigate how flexible nonparametric modeling can support population estimation and survey-weight construction in structured populations.

A key computational feature of our approach is its use of structured covariance matrices that exploit the known grouping of the population. Standard Gaussian process regression can require $\mathcal{O}(n^3)$ computational cost for covariance matrix operations involving $n$ observations (Rasmussen and Williams, \citeyear{RasmussenWilliams2006}). Rather than constructing an unrestricted covariance matrix over all observations, we exploit the hierarchical decomposition and kernel structure associated with the poststratification space (Banerjee et al., \citeyear{BanerjeeGelfandFinleySang2008}; Gelman et al., \citeyear{GelmanCarlinSternDunsonVehtariRubin2013}; Si et al., \citeyear{SiPillaiGelman2015}; M\"akela et al., \citeyear{MakelaSiGelman2018}; Roustant et al., \citeyear{RoustantPadonouDevilleClementPerrinGiorlaWynn2020}). This representation allows posterior inference to leverage the numbers of observations, hierarchical groups, and kernel cells, reducing the computational burden when the structured representation is substantially smaller than the observation-level problem. The resulting framework combines flexible modeling with computationally efficient prediction across large poststratification spaces.

Beyond model construction and computation, a central contribution is the extension of locally equivalent-weight construction to the HGP setting. For nonlinear model-based population estimators, an exact global representation as a weighted sample mean is generally unavailable. Instead, locally equivalent weights can be defined through a first-order expansion of the population estimator around the fitted latent mean surface, linking the estimator's local sensitivity to perturbations in the observed outcomes. Building on this perspective, we derive posterior prediction formulas and establish a uniform convergence result for the posterior predictive mean under stated regularity conditions. These results provide a theoretical foundation for studying the fitted latent surface and its role in constructing locally equivalent weights. The relationship between uniform convergence and the behavior of the resulting weights depends on the additional regularity required for the corresponding linearization.

We evaluate the proposed framework through simulation studies and an application to the Future of Families and Child Wellbeing Study (FFCWS). Our study is situated within the broader literature on regression estimation, calibration, and survey weighting (Little, \citeyear{Little1991}; Deville and Särndal, \citeyear{DevilleSarndal1992}; Gelman, \citeyear{Gelman2007}; Si et al., \citeyear{SiPillaiGelman2015}). Traditional calibration and trimming procedures can help control extreme weights, but their performance may depend on additional modeling and tuning choices, particularly in sparse settings (Little and Vartivarian, \citeyear{LittleVartivarian2003}; Haziza and Beaumont, \citeyear{HazizaBeaumont2017}; Chen et al., \citeyear{ChenLiWu2020}). Similarly, logistic regression can encounter difficulties in finely stratified settings with sparse cells or separation (Albert and Anderson, \citeyear{AlbertAnderson1984}; Chen et al., \citeyear{ChenElliottLittle2010}; Kosmidis and Firth, \citeyear{KosmidisFirth2021}). Our simulation studies therefore compare HGP with unregularized logistic regression and logistic regression with Gaussian-prior regularization under parametric and nonparametric data-generating mechanisms. We examine population-weighted prediction error, sensitivity to sample size and eligible-frame coverage, subgroup- and cell-level prediction accuracy, and agreement between model-based weighted estimates and population benchmarks. The results show that the relative performance of the methods depends on the underlying population structure: logistic regression performs well under correct parametric specification, whereas HGP achieves lower population-weighted prediction error under nonparametric structure. The FFCWS application further illustrates the behavior of the proposed framework in an empirical survey setting.

This paper makes three main contributions. First, it develops a structured HGP framework for model-based survey weighting that accommodates nonlinear variation and borrows information across related poststratification cells. Second, it extends locally equivalent-weight construction to this nonparametric setting and establishes theoretical results for posterior prediction that support the analysis of the resulting weights under appropriate regularity conditions. Third, it evaluates the computational and empirical behavior of the proposed method, identifying settings in which nonparametric flexibility can improve population prediction relative to parametric alternatives. Together, these contributions extend the connection between model-based population estimation and survey weighting to structured nonparametric models.

\section{Hierarchical Gaussian Process Regression}

Our goal is to estimate the population mean of an outcome variable \(y\) across structured subgroups, given a sample that sparsely covers the full poststratification space. Let the population be partitioned into \(H\) hierarchical groups (e.g., broader demographic strata for which population-level information is available) and \(K\) kernel groups (i.e., the full set of poststratification cells spanning all demographic combinations). Let \(K_{\mathrm{obs}}\leq K\) denote the number of distinct poststratification cells represented in the sample. In typical survey settings, only a subset of the full poststratification space may be observed (e.g., 1,100 observed cells out of 4,480 total), limiting the information available for direct cell-level estimation (Lee and Gelman, \citeyear{LeeGelman2024}). Related challenges arise in practical survey weighting: Lee and Gelman (\citeyear{LeeGelman2024}) report that small city-level sample sizes complicated raking to demographic margins, leading them to pool some demographic categories.

To overcome this sparsity, we propose a hierarchical Gaussian process (HGP) model that enables information sharing across related groups (Si et al., \citeyear{SiPillaiGelman2015}; Roustant et al., \citeyear{RoustantPadonouDevilleClementPerrinGiorlaWynn2020}; Li et al., \citeyear{LiJonesBanerjeeEngelhardt2025}). Specifically, we assume that the latent outcome-generating process is composed of additive components: a structured group-level effect $\mu_h$ and a smooth kernel-based effect $\eta_k$ modeled by a Gaussian process.

Formally, for each observation \(i = 1, \ldots, n\) in the sample, we define the latent function as
\[
f_i = \mu_{h[i]} + \eta_{k[i]},
\quad \text{for } i = 1, \ldots, n.
\]
The conditional mean is specified through a monotonic link function \(g\):
\[
\mathbb{E}[Y_i \mid X_i] = g^{-1}(f_i).
\]
Here, \(Y \in \mathbb{R}^n\), \(X \in \mathbb{R}^{n \times p}\), and \(g(\cdot)\) is a monotonic link function. The term \(\mu_{h[i]}\) denotes the hierarchical effect for group \(h[i] \in \{1,\ldots,H\}\), while \(\eta_{k[i]}\) denotes the kernel-based effect for cell \(k[i] \in \{1,\ldots,K\}\). Although only \(K_{\mathrm{obs}}\) distinct kernel groups are represented in the sample, predictions for the full set of \(K\) poststratification cells are obtained through the kernel covariance structure, including for cells not observed in the sample.

For continuous outcomes under the identity link, we additionally assume
\[
    Y_i = f_i + \varepsilon_i,
    \qquad \varepsilon_i \sim \mathcal{N}(0, \sigma^2).
\]
For binary outcomes using a non-identity link (e.g., probit or logit), we instead assume
\[
    Y_i \mid f_i
    \sim \operatorname{Bernoulli}\bigl(g^{-1}(f_i)\bigr).
\]
Thus, for binary outcomes,
\[
    g(\mathbb{E}[Y_i \mid X_i]) = f_i,
\]
but the observed binary response remains random conditional on the latent function. The Gaussian observation-noise variance \(\sigma^2\) is used for continuous outcomes and is not an additional observation-noise term in the binary likelihood.

We assume that the hierarchical effects \(\mu\) and the kernel component \(\eta\) are independent, with the following priors on the latent components:
\[
\begin{aligned}
\mu &\sim
\mathcal{N}(\bm{\mu}_{\text{prior}}, \Sigma_\mu),
& \eta &\sim \mathcal{GP}(0, K_\eta),
\end{aligned}
\]
where \(\bm{\mu}_{\text{prior}} \in \mathbb{R}^H\) encodes known population-level means, and \(K_\eta\) is the Radial Basis Function (RBF) kernel defined over the \(K\) structured groups. We denote the induced prior mean vectors of the latent function at the observed and prediction locations by \(\bm{\mu}_f\) and \(\bm{\mu}_{f_*}\), respectively. These vectors are determined by the corresponding hierarchical group memberships and \(\bm{\mu}_{\text{prior}}\). This independence assumption yields an additive covariance structure in which the hierarchical component captures broad group-level variation, while the Gaussian process component captures dependence across poststratification cells through the kernel covariance. Although the two components may explain overlapping patterns in the observed data, their separate covariance structures provide a parsimonious specification for structured prediction.

The squared exponential kernel is computed via:
\begin{equation*}
    [K_\eta]_{ij}
    = \alpha^2 \exp\left(
        -\frac{\|X_i - X_j\|^2}{2\rho^2}
    \right),
\end{equation*}
where \(K_\eta:\mathbb{R}^P \times \mathbb{R}^P \to \mathbb{R}\), given that \(X_i\) and \(X_j\) are \(P\)-dimensional covariate vectors for groups \(i\) and \(j\), \(\alpha\) is the marginal standard deviation, and \(\rho\) is the length-scale parameter.

\section{Gaussian Processes}

\subsection{Gaussian Conditioning and Posterior Prediction}

\textbf{Notation.} For convenience, we denote
\(g(\mathbb{E}[Y \mid X])\) as \(g(\mu)\). Under the model above, \(g(\mu)=f\) whenever \(g\) is invertible on the relevant range.

\begin{theorem}[]
Assume the hierarchical Gaussian process prior defined above. Let \(\mathbf f\) and \(\mathbf f_*\) denote the latent function values at the observed and prediction locations, respectively. Under the joint Gaussian prior on these latent values, their joint distribution is Gaussian:
\[
\begin{bmatrix}
    \mathbf f \\
    \mathbf f_*
\end{bmatrix}
\sim
\mathcal{N}\left(
\begin{bmatrix}
    \boldsymbol{\mu}_{f} \\
    \boldsymbol{\mu}_{f_*}
\end{bmatrix},
\begin{bmatrix}
    \boldsymbol{\Sigma}_{ff}
    & \boldsymbol{\Sigma}_{f f_*} \\
    \boldsymbol{\Sigma}_{f_* f}
    & \boldsymbol{\Sigma}_{f_* f_*}
\end{bmatrix}
\right),
\]
where the covariance matrices are determined by the hierarchical effects and the kernel \(K_\eta\). Then the conditional distribution of the latent function at the prediction locations, given the latent function values at the observed locations, is
\begingroup
\small
\setlength{\abovedisplayskip}{5pt}
\setlength{\belowdisplayskip}{5pt}
\setlength{\jot}{2pt}
\[
f_* \mid \mathbf f
\sim
\mathcal{N}\!\left(
\boldsymbol{\mu}_{f_*}
+ \boldsymbol{\Sigma}_{f_* f}\boldsymbol{\Sigma}_{ff}^{-1}
(\mathbf f-\boldsymbol{\mu}_f),
\;
\boldsymbol{\Sigma}_{f_* f_*}
- \boldsymbol{\Sigma}_{f_* f}\boldsymbol{\Sigma}_{ff}^{-1}
\boldsymbol{\Sigma}_{f f_*}
\right).
\]
\normalsize
\endgroup
We denote this conditional mean and covariance by
\begingroup
\setlength{\abovedisplayskip}{5pt}
\setlength{\belowdisplayskip}{5pt}
\[
f_* \mid \mathbf f
\sim
\mathcal{N}\!\left(
\mathbb{E}[f_* \mid \mathbf f],
\operatorname{Cov}(f_* \mid \mathbf f)
\right).
\]
\endgroup
For continuous outcomes under the identity link, the observed responses satisfy
\begingroup
\setlength{\abovedisplayskip}{5pt}
\setlength{\belowdisplayskip}{5pt}
\[
\mathbf y = \mathbf f + \boldsymbol{\varepsilon},
\qquad
\boldsymbol{\varepsilon} \sim \mathcal{N}(\mathbf 0,\sigma^2\mathbf I_n).
\]
\endgroup
where the observation noise is independent of the latent function. Define
\begingroup
\setlength{\abovedisplayskip}{5pt}
\setlength{\belowdisplayskip}{5pt}
\[
\boldsymbol{\Sigma}_{yy}
= \boldsymbol{\Sigma}_{ff}+\sigma^2\mathbf I_n.
\]
\endgroup
In this case, the posterior prediction of the latent function given the observed responses is
\begingroup
\small
\setlength{\abovedisplayskip}{5pt}
\setlength{\belowdisplayskip}{5pt}
\[
f_* \mid \mathbf y
\sim
\mathcal{N}\!\left(
\boldsymbol{\mu}_{f_*}
+ \boldsymbol{\Sigma}_{f_*f}\boldsymbol{\Sigma}_{yy}^{-1}
(\mathbf y-\boldsymbol{\mu}_f),
\;
\boldsymbol{\Sigma}_{f_* f_*}
- \boldsymbol{\Sigma}_{f_*f}\boldsymbol{\Sigma}_{yy}^{-1}
\boldsymbol{\Sigma}_{ff_*}
\right).
\]
\normalsize
\endgroup
Because \(\mathbf y=\mathbf f+\boldsymbol{\epsilon}\), where
\(\boldsymbol{\epsilon}\sim\mathcal{N}(\mathbf 0,\sigma^2 I_n)\) is independent of the latent GP, we have
\(\boldsymbol{\Sigma}_{f_*y}=\boldsymbol{\Sigma}_{f_*f}\) and
\(\boldsymbol{\Sigma}_{yf_*}=\boldsymbol{\Sigma}_{ff_*}\). Consequently, the predictive distribution for a future continuous observation is
\begingroup
\setlength{\abovedisplayskip}{5pt}
\setlength{\belowdisplayskip}{5pt}
\[
\mathbf y_* \mid \mathbf y
\sim
\mathcal{N}\!\left(
\mathbb{E}[f_* \mid \mathbf y],
\operatorname{Cov}(f_* \mid \mathbf y)
+\sigma^2\mathbf I_{n_*}
\right).
\]
\endgroup
For binary outcomes, the latent function retains its joint Gaussian prior, so the conditional distribution of the prediction-location latent values given the observed-location latent values is Gaussian. However, conditioning directly on the observed binary responses through the Bernoulli likelihood generally yields a non-Gaussian posterior. Thus, the Gaussian conditioning formula above is exact for conditioning on the latent vector, whereas prediction conditional on the observed binary responses requires accounting for the Bernoulli likelihood.
\end{theorem}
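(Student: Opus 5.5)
The plan is to reduce every claim in the theorem to one standard fact: the conditional distribution of one block of a jointly Gaussian vector given another. I will first establish that the latent vector is jointly Gaussian. Stack the latent values as
\[
\begin{bmatrix}\mathbf f\\ \mathbf f_*\end{bmatrix} = A\mu + B\eta_{1:K},
\]
where \(A\) and \(B\) are the 0--1 membership matrices selecting \(h[i]\) and \(k[i]\) for each observed and prediction location. Because \(\mu\sim\mathcal N(\bm\mu_{\text{prior}},\Sigma_\mu)\) is independent of \(\eta_{1:K}\sim\mathcal N(0,K_\eta)\), the stacked vector is an affine image of a Gaussian vector. Its mean is \(A\bm\mu_{\text{prior}}\), which identifies \(\bm\mu_f\) and \(\bm\mu_{f_*}\). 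Its covariance is \(A\Sigma_\mu A^\top + BK_\eta B^\top\), whose blocks are \(\boldsymbol\Sigma_{ff}\), \(\boldsymbol\Sigma_{ff_*}\) and \(\boldsymbol\Sigma_{f_*f_*}\). This gives the displayed joint law, with covariances determined by the hierarchical effects and \(K_\eta\).

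Next I will prove the conditioning lemma. Assume \(\boldsymbol\Sigma_{ff}\) is invertible; if it is not, replace the inverse by a generalized inverse. Define the residual
\[
\mathbf r = \mathbf f_* - \bm\mu_{f_*} - \boldsymbol\Sigma_{f_*f}\boldsymbol\Sigma_{ff}^{-1}(\mathbf f-\bm\mu_f).
\]
The pair \((\mathbf r,\mathbf f)\) is jointly Gaussian, and a one-line computation gives \(\operatorname{Cov}(\mathbf r,\mathbf f)=0\). Uncorrelated jointly Gaussian blocks are independent, so \(\mathbf r\) is independent of \(\mathbf f\). The law of \(\mathbf r\) is \(\mathcal N(0,\boldsymbol\Sigma_{f_*f_*}-\boldsymbol\Sigma_{f_*f}\boldsymbol\Sigma_{ff}^{-1}\boldsymbol\Sigma_{ff_*})\). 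Writing \(\mathbf f_*\) as the affine function of \(\mathbf f\) plus the independent \(\mathbf r\) then yields the stated conditional mean and covariance. (An alternative is to complete the square in the joint density via the Schur complement, but the decorrelation argument avoids block-inverse algebra.)

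For the continuous case I will stack \((\mathbf y,\mathbf f_*)\) with \(\mathbf y=\mathbf f+\bm\varepsilon\), where \(\bm\varepsilon\) is independent of \((\mu,\eta)\). This pair is again jointly Gaussian, with \(\mathbb E\mathbf y=\bm\mu_f\) and \(\operatorname{Cov}(\mathbf y)=\boldsymbol\Sigma_{ff}+\sigma^2 I_n=\boldsymbol\Sigma_{yy}\). Independence of the noise gives
\[
\operatorname{Cov}(\mathbf f_*,\mathbf y)=\operatorname{Cov}(\mathbf f_*,\mathbf f)=\boldsymbol\Sigma_{f_*f},
\]
and likewise \(\operatorname{Cov}(\mathbf y,\mathbf f_*)=\boldsymbol\Sigma_{ff_*}\). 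Applying the same conditioning lemma with \(\mathbf y\) in place of \(\mathbf f\) gives the posterior of \(\mathbf f_*\mid\mathbf y\). The matrix \(\boldsymbol\Sigma_{yy}\) is positive definite because \(\sigma^2>0\), so invertibility is automatic here.

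For the predictive distribution, write \(\mathbf y_*=\mathbf f_*+\bm\varepsilon_*\), where \(\bm\varepsilon_*\sim\mathcal N(0,\sigma^2 I_{n_*})\) is independent of \((\mathbf f_*,\mathbf y)\). Conditional on \(\mathbf y\), the vector \(\mathbf y_*\) is a sum of two independent Gaussians, so the means add and the covariances add, giving \(\operatorname{Cov}(f_*\mid\mathbf y)+\sigma^2 I_{n_*}\).

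For the binary remark, I will note that the prior-level statement \(\mathbf f_*\mid\mathbf f\) does not involve the likelihood. Moreover, \(\mathbf y\) is conditionally independent of \(\mathbf f_*\) given \(\mathbf f\), so
\[
p(\mathbf f_*\mid\mathbf y)=\int p(\mathbf f_*\mid\mathbf f)\,p(\mathbf f\mid\mathbf y)\,d\mathbf f.
\]
In this expression \(p(\mathbf f\mid\mathbf y)\propto p(\mathbf f)\prod_i g^{-1}(f_i)^{y_i}\bigl(1-g^{-1}(f_i)\bigr)^{1-y_i}\), which is not Gaussian in general. A single observation with a logit or probit link already shows this, since the log-density is not quadratic.

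The only delicate step is the possible singularity of \(\boldsymbol\Sigma_{ff}\). Repeated cells make \(B\) have identical rows, so \(\boldsymbol\Sigma_{ff}\) can be rank deficient in the noiseless conditioning. I will handle this either by assuming distinct locations or by invoking the Moore–Penrose version of the lemma; the decorrelation argument still goes through because \(\mathbf f-\bm\mu_f\) lies almost surely in the range of \(\boldsymbol\Sigma_{ff}\).
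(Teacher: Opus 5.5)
Your proposal is correct and follows the same route as the paper's proof. Both establish joint Gaussianity of $(\mathbf f,\mathbf f_*)$ from the independent additive prior components, condition first on $\mathbf f$ and then on $\mathbf y$ using $\operatorname{Cov}(\mathbf f_*,\mathbf y)=\boldsymbol{\Sigma}_{f_*f}$, add the independent noise $\boldsymbol{\varepsilon}_*$ for the predictive law, and observe that the Bernoulli likelihood breaks conjugacy. You go somewhat further than the paper by deriving the conditioning formula through decorrelation rather than citing it, and by flagging that $\boldsymbol{\Sigma}_{ff}$ is singular whenever two sampled units share a cell, a point the paper's proof and its use of $\boldsymbol{\Sigma}_{ff}^{-1}$ pass over.
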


\subsection{Efficient Posterior Inference}

\begin{theorem}[]
Suppose the covariance structure admits a representation whose effective dimension is \(H+K_{\mathrm{obs}}\), where \(K_{\mathrm{obs}}\) is the number of distinct kernel groups represented in the sample. Provided that \(H+K_{\mathrm{obs}}\ll n\) and the required covariance matrices are available, the computational cost of solving linear systems involving the \(n\times n\) covariance matrix \(\Sigma_{\mathbf{yy}}\) can be reduced from \(\mathcal{O}(n^3)\) to
\[
\mathcal{O}\left((H+K_{\mathrm{obs}})^2
(n+H+K_{\mathrm{obs}})\right).
\]
\end{theorem}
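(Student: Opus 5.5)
The plan is to write the observation-level covariance in low-rank-plus-diagonal form and invert it with the Woodbury identity, so that every dense factorization has size $m:=H+K_{\mathrm{obs}}$ rather than $n$. Let $Z_\mu\in\{0,1\}^{n\times H}$ be the membership matrix with $[Z_\mu]_{ih}=1$ iff $h[i]=h$. Let $Z_\eta\in\{0,1\}^{n\times K_{\mathrm{obs}}}$ map each observation to its observed kernel cell. Let $K_{\mathrm{obs},\eta}$ be the $K_{\mathrm{obs}}\times K_{\mathrm{obs}}$ RBF kernel matrix over the observed cells. By the additive model $f_i=\mu_{h[i]}+\eta_{k[i]}$ and the independence of $\mu$ and $\eta$,
\[
\boldsymbol{\Sigma}_{yy}=Z\Lambda Z^\top+\sigma^2\mathbf I_n,\qquad Z=[Z_\mu\;Z_\eta]\in\mathbb{R}^{n\times m},\quad \Lambda=\operatorname{blockdiag}(\Sigma_\mu,K_{\mathrm{obs},\eta}).
\]
This is the ``effective dimension $m$'' representation assumed in the statement.

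The key step applies Woodbury:
\[
\boldsymbol{\Sigma}_{yy}^{-1}=\sigma^{-2}\mathbf I_n-\sigma^{-4}Z\bigl(\Lambda^{-1}+\sigma^{-2}Z^\top Z\bigr)^{-1}Z^\top .
\]
To avoid assuming $\Lambda$ invertible, for instance when the RBF Gram matrix is numerically singular, one can use the equivalent form
\[
\boldsymbol{\Sigma}_{yy}^{-1}=\sigma^{-2}\bigl(\mathbf I_n-Z\Lambda(\sigma^2\mathbf I_m+Z^\top Z\Lambda)^{-1}Z^\top\bigr).
\]
Both forms are checked by multiplying out against $\boldsymbol{\Sigma}_{yy}$. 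Solving $\boldsymbol{\Sigma}_{yy}v=b$ then reduces to three operations: computing $Z^\top b$, solving one $m\times m$ system, and multiplying by $Z$.

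Next comes the cost accounting, which produces the stated bound:
\begin{itemize}
\item Forming $Z^\top Z$ costs $\mathcal{O}(nm^2)$ for a generic dense $Z$. It is actually $\mathcal{O}(n)$ with indicator structure, but the dense bound is what the statement claims.
\item Forming $\Lambda^{-1}$, or $Z^\top Z\Lambda$, costs $\mathcal{O}(m^3)$.
\item The Cholesky factorization of the $m\times m$ capacitance matrix costs $\mathcal{O}(m^3)$.
\item Each right-hand side needs $Z^\top b$ and $Zc$, at $\mathcal{O}(nm)$, plus triangular solves at $\mathcal{O}(m^2)$.
\end{itemize}
The total is $\mathcal{O}(nm^2+m^3)=\mathcal{O}(m^2(n+m))$. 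This matches the bound in the statement with $m=H+K_{\mathrm{obs}}$, and it improves on $\mathcal{O}(n^3)$ whenever $m\ll n$. The posterior mean from Theorem 1, $\boldsymbol{\Sigma}_{f_*f}\boldsymbol{\Sigma}_{yy}^{-1}(\mathbf y-\boldsymbol{\mu}_f)$, uses one such solve. The prediction-location covariances also factor as $\boldsymbol{\Sigma}_{f_*f}=Z_*\Lambda_* Z^\top$, so they never need to be formed at size $n\times n$.

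The main obstacle is making precise which quantities count as ``available'' and avoiding a hidden $\mathcal{O}(n^2)$ or $\mathcal{O}(n^3)$ step. I would emphasize that $\boldsymbol{\Sigma}_{yy}$ is never formed explicitly, only $Z$ and $\Lambda$. I would also require $\sigma^2>0$ so that the diagonal part is invertible.

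For binary outcomes, I would remark that Laplace or EP iterations replace $\sigma^2\mathbf I_n$ with a diagonal $W^{-1}$. Woodbury applies identically to $Z\Lambda Z^\top+W^{-1}$, giving the same per-iteration cost.
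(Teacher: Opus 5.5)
Your proposal is correct and follows essentially the same route as the paper. Both write $\boldsymbol{\Sigma}_{yy}=\sigma^2\mathbf I_n+Z\Lambda Z^\top$ with $Z=[Z_\mu\;Z_\eta]$ and $\Lambda=\operatorname{diag}(\Sigma_\mu,K_{\eta,\mathrm{obs}})$, apply Woodbury, and count $\mathcal{O}(nm^2)$ for $Z^\top Z$ plus $\mathcal{O}(m^3)$ for the $m\times m$ factorization. Your push-through variant usefully drops the paper's assumption that $\Lambda$ is invertible; note that its capacitance matrix $\sigma^2\mathbf I_m+Z^\top Z\Lambda$ is not symmetric, so you would factor it with LU (or symmetrize via $\Lambda^{1/2}$) rather than Cholesky, at the same $\mathcal{O}(m^3)$ cost.
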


\noindent\textbf{Efficient implementation for full-frame poststratification prediction.}
Let \(K_{\mathrm{obs}}\) denote the number of distinct poststratification
cells represented in the sample, and let \(K_{\mathrm{full}}\) denote
the total number of cells in the population poststratification frame.
Rather than factorizing a GP covariance matrix over all
\(K_{\mathrm{full}}\) population cells, our implementation uses the
latent GP values at the \(K_{\mathrm{obs}}\) observed cells to predict
over the full population frame.

Let \(\boldsymbol{\eta}_{\mathrm{obs}}\in
\mathbb R^{K_{\mathrm{obs}}}\) denote the latent GP values at the
observed cells, and let
\(\mathbf K_{\mathrm{obs}}\in
\mathbb R^{K_{\mathrm{obs}}\times K_{\mathrm{obs}}}\)
denote their GP covariance matrix. Define
\(\mathbf K_{\mathrm{cross}}\in
\mathbb R^{K_{\mathrm{full}}\times K_{\mathrm{obs}}}\)
as the cross-covariance matrix between the full population frame and
the observed cells. Under the zero-mean GP prior for the component \(\boldsymbol{\eta}\),
the conditional GP mean over the population frame is
\[
\widehat{\boldsymbol{\eta}}_*
=
\mathbf K_{\mathrm{cross}}
\mathbf K_{\mathrm{obs}}^{-1}
\boldsymbol{\eta}_{\mathrm{obs}}.
\]
In computation, the linear system involving
\(\mathbf K_{\mathrm{obs}}\) is solved using its Cholesky factorization,
rather than by explicitly forming its inverse. For the binary-outcome
model, the resulting plug-in predictions are
\[
\widehat p_k
=
g^{-1}\!\left(
\mu_{h(k)}+\widehat{\eta}_{*,k}
\right),
\qquad k=1,\ldots,K_{\mathrm{full}},
\]
where \(g^{-1}\) is the inverse-logit function and \(\mu_{h(k)}\)
is the hierarchical group effect for cell \(k\). This procedure
provides predictions for the entire population frame while restricting
the GP factorization to the \(K_{\mathrm{obs}}\) observed cells.
These plug-in predictions use the conditional GP mean and do not
include additional conditional GP residual draws at the unobserved
population cells.

\begin{tcolorbox}[
  colback=white!95!gray!10,
  colframe=black,
  title=General Algorithm for Structured GP Posterior Inference
]
\setlength{\abovedisplayskip}{0pt}
\setlength{\belowdisplayskip}{0pt}

\noindent \textbf{Step 1: Construct Covariance Matrices}

Compute \(\Sigma_{yy}, \Sigma_{yf_*}, \Sigma_{f_*y},
\Sigma_{f_*f_*}\) using the estimated covariance matrices.

\noindent\textit{Continuous outcome:$\Sigma_{yy}=Z_\mu\Sigma_\mu Z_\mu^\top
+Z_\eta K_\eta Z_\eta^\top+\sigma^2I_n.$}
\\
\noindent\textit{Binary outcome: $\Sigma_{ff}=Z_\mu\Sigma_\mu Z_\mu^\top
+Z_\eta K_\eta Z_\eta^\top.$}
\\
If needed for numerical stability, a small \(\epsilon>0\)
may be added to the diagonal of the latent covariance matrix.

\noindent \textbf{Step 2: Cholesky Decomposition \& Posterior Mean}

For continuous outcomes, compute \(L:=\operatorname{Cholesky}(\Sigma_{yy})\),
where \(\Sigma_{yy}=LL^\top\), and define
\(\mathbf r_y:=\mathbf y-\boldsymbol\mu_f\).
Compute $V_1:=\operatorname{solve}\bigl(L^\top,
\operatorname{solve}(L,\mathbf r_y)\bigr)
=(LL^\top)^{-1}\mathbf r_y.$
\\
Then,
\[
\mathbb E[\mathbf f_*\mid\mathbf y]
=\boldsymbol\mu_{f_*}+\Sigma_{f_*y}V_1.
\]

\noindent \textbf{Step 3: Posterior Covariance}

Compute \(V_2:=\operatorname{solve}(L,\Sigma_{yf_*})\).
Then,
\[
\operatorname{Cov}(\mathbf f_*\mid\mathbf y)
=\Sigma_{f_*f_*}-V_2^\top V_2.
\]

\noindent \textbf{Step 4: Posterior Sampling}

\textit{Identity link:}
\[
\mathbf y_*\mid\mathbf y\sim
\mathcal N\left(
\mathbb E[\mathbf f_*\mid\mathbf y],
\operatorname{Cov}(\mathbf f_*\mid\mathbf y)+\sigma^2I_{n_*}
\right).
\]
\textit{Binary outcome:} Obtain a draw \(\mathbf f_*\) from the latent
posterior \(p(\mathbf f_*\mid\mathbf y)\) using an appropriate
inference method for the Bernoulli likelihood. Then,
\[
\mathbf p_*:=g^{-1}(\mathbf f_*),\qquad
y_{*i}\mid f_{*i}\sim\operatorname{Bernoulli}(p_{*i}).
\]
\end{tcolorbox}

To guide readers through the main computational steps of posterior prediction and sampling, we present a general algorithm for structured GP posterior inference. The section on efficient implementation for full-frame poststratification prediction describes how these computations can be carried out more efficiently by restricting GP covariance factorization to the observed cells.

\subsection{Connection to kernel ridge regression and crossed random effects model}

\begin{proposition}[]
\label{thm:krr-gp}
Consider our hierarchical Gaussian process model with an
\textbf{identity link} \(g(x)=x\), noise variance \(\sigma^2\),
and suppose the group-level prior covariance
\(\Sigma_\mu\to 0\) and
\(\boldsymbol{\mu}_{\mathrm{prior}}=\mathbf{0}\), so that
\(\mu_h\equiv 0\). Let
\[
\mathbf{K} = K_\eta[k,k']
\]
denote the \(n\times n\) kernel matrix evaluated at the \(n\)
observed inputs, and let \(\mathbf{y}\in\mathbb{R}^n\) be
the observed outcomes. Then the GP posterior mean at a new
input \(x^*\) is
\[
\mathbb{E}[f(x^*)\mid\mathbf{y}]
=
\mathbf{k}(x^*,X)^\top
(\mathbf{K}+\sigma^2 I_n)^{-1}\mathbf{y},
\]
which coincides exactly with the Kernel Ridge Regression (KRR)
solution
\[
\hat f(x^*)=\sum_{i=1}^n\alpha_i k_\eta(x_i,x^*),
\qquad
\boldsymbol{\alpha}
=(\mathbf{K}+\lambda I_n)^{-1}\mathbf{y},
\qquad
\lambda=\sigma^2.
\]
\end{proposition}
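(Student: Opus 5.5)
The plan is to specialize the posterior formula of Theorem 1 (the continuous, identity-link case) to the degenerate hierarchical component, and then identify the result with the KRR minimizer via its normal equations.

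\textbf{Step 1: reduce the covariances.} Under the additive structure of Step 1 of the algorithm, \(\Sigma_{ff}=Z_\mu\Sigma_\mu Z_\mu^\top+Z_\eta K_\eta Z_\eta^\top\). The cross-covariance \(\Sigma_{f_*f}\) has the analogous form. Letting \(\Sigma_\mu\to 0\) with \(\boldsymbol\mu_{\mathrm{prior}}=\mathbf 0\), I will note three consequences:
\begin{itemize}
\item \(\boldsymbol\mu_f=\mathbf 0\) and \(\boldsymbol\mu_{f_*}=0\);
\item \(\Sigma_{ff}\to Z_\eta K_\eta Z_\eta^\top=\mathbf K\), the \(n\times n\) kernel matrix at the observed inputs;
\item \(\Sigma_{f_*f}\to \mathbf k(x^*,X)^\top\).
\end{itemize}

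\textbf{Step 2: justify the limit.} Since \(\sigma^2>0\), \(\Sigma_{yy}=\Sigma_{ff}+\sigma^2 I_n\) has smallest eigenvalue at least \(\sigma^2\) along the whole limit. Matrix inversion is continuous on this set, so the posterior mean of Theorem 1 converges to \(\mathbf k(x^*,X)^\top(\mathbf K+\sigma^2 I_n)^{-1}\mathbf y\). This continuity argument is the only delicate point, because \(\Sigma_\mu=0\) is a degenerate prior. It is handled by the uniform bound \(\lambda_{\min}(\Sigma_{yy})\ge\sigma^2\). Alternatively, one can define the model with \(\mu\equiv 0\) directly and apply Theorem 1 as stated.

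\textbf{Step 3: identify with KRR.} I will define KRR as the minimizer over the RKHS \(\mathcal H_k\) of
\[
\sum_i (y_i-f(x_i))^2+\lambda\|f\|_{\mathcal H_k}^2 .
\]
By the representer theorem the minimizer has the form \(f=\sum_i\alpha_i k(x_i,\cdot)\). The objective then becomes
\[
\|\mathbf y-\mathbf K\boldsymbol\alpha\|^2+\lambda\boldsymbol\alpha^\top\mathbf K\boldsymbol\alpha .
\]
Its first-order condition is \(\mathbf K\bigl((\mathbf K+\lambda I)\boldsymbol\alpha-\mathbf y\bigr)=0\), which is solved by \(\boldsymbol\alpha=(\mathbf K+\lambda I)^{-1}\mathbf y\). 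This solution is valid even when \(\mathbf K\) is singular, because the fitted function is unchanged by components of \(\boldsymbol\alpha\) in the null space of \(\mathbf K\).

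Setting \(\lambda=\sigma^2\), we get
\[
\hat f(x^*)=\sum_i\alpha_i k(x_i,x^*)=\mathbf k(x^*,X)^\top(\mathbf K+\sigma^2I)^{-1}\mathbf y ,
\]
which matches the limit obtained in Step 2.

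\textbf{Remark on repeated cells.} When several observations share the same cell, \(\mathbf K\) has repeated rows and is singular. The ridge term \(\sigma^2 I\) keeps \(\mathbf K+\sigma^2 I\) invertible, so the argument above still applies.
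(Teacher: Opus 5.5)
Your proposal is correct and follows the same route as the paper: collapse the model to $f=\eta$ with zero prior mean, read off the GP posterior mean $\mathbf{k}(x^*,X)^\top(\mathbf{K}+\sigma^2 I_n)^{-1}\mathbf{y}$, and match it to the KRR dual solution with $\lambda=\sigma^2$. Your continuity argument for the limit $\Sigma_\mu\to 0$, via $\lambda_{\min}(\Sigma_{yy})\ge\sigma^2$, and your representer-theorem derivation of the dual form, including the singular-$\mathbf{K}$ case, fill in steps that the paper simply asserts.
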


\begin{proposition}[]
Let \(X\in\mathbb{R}^{K\times p}\) be the matrix of covariates
for the \(K\) hierarchical groups, and set
\[
K_\eta=\tau^2XX^\top.
\]
Assume the identity link \(g(x)=x\), observation noise
\(\varepsilon_i\sim\mathcal{N}(0,\sigma^2)\), and a group-level
intercept prior
\[
\boldsymbol{\mu}\sim
\mathcal{N}(\boldsymbol{\mu}_{\mathrm{prior}},\Sigma_\mu).
\]
The hierarchical GP model
\[
y_i=\mu_{h[i]}+\eta_{k[i]}+\varepsilon_i,
\qquad
\boldsymbol{\eta}\sim\mathcal{N}(\mathbf{0},K_\eta),
\]
admits the random-slope representation
\[
\eta_k=\mathbf{x}_k^\top\mathbf{b}_1,
\qquad
\mathbf{b}_1\sim\mathcal{N}(\mathbf{0},\tau^2I_p).
\]
Consequently, the outcome model can be written as
\[
y_i=\mu_{h[i]}+\mathbf{x}_{k[i]}^\top\mathbf{b}_1
+\varepsilon_i,
\]
where \(\mu_{h[i]}\) is the group-level random intercept and
\(\mathbf{b}_1\) is the shared random-slope coefficient vector.
In a three-index representation, \(h[i]\) may correspond to
the group indexed by \((i,j)\), while \(k[i]\) may correspond
to the group indexed by \((i,j,k)\). Thus, the intercept and
slope contribution depend on their respective group assignments,
while the coefficient vector \(\mathbf{b}_1\) is shared across
groups.

\end{proposition}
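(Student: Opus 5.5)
The argument is a distributional-equivalence argument. Both Gaussian formulations are determined by their means and covariances. It therefore suffices to show that the random vector \(\boldsymbol{\eta}=(\eta_1,\dots,\eta_K)^\top\) defined by \(\eta_k=\mathbf{x}_k^\top\mathbf{b}_1\), with \(\mathbf{b}_1\sim\mathcal{N}(\mathbf{0},\tau^2 I_p)\) independent of \(\boldsymbol{\mu}\) and \(\boldsymbol{\varepsilon}\), has the same law as \(\mathcal{N}(\mathbf{0},K_\eta)\) with \(K_\eta=\tau^2XX^\top\). Once this is shown, the joint law of \((\boldsymbol{\mu},\boldsymbol{\eta},\boldsymbol{\varepsilon})\) is unchanged. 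Hence the joint law of \(\mathbf{y}\), and of \(\mathbf{y}\) together with any prediction-location latent values, is unchanged. All posterior and predictive quantities from the first theorem then coincide under the two representations.

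The steps are as follows.

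\emph{Step 1.} Write \(\boldsymbol{\eta}=X\mathbf{b}_1\), a linear image of a Gaussian vector. It is therefore Gaussian with mean \(X\,\mathbb{E}[\mathbf{b}_1]=\mathbf{0}\). Its covariance is \(X\operatorname{Cov}(\mathbf{b}_1)X^\top=\tau^2XX^\top=K_\eta\). Entrywise, \(\operatorname{Cov}(\eta_k,\eta_{k'})=\tau^2\mathbf{x}_k^\top\mathbf{x}_{k'}\), which is the linear kernel.

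\emph{Step 2.} The model assumes \(\boldsymbol{\mu}\), \(\boldsymbol{\eta}\) and \(\boldsymbol{\varepsilon}\) are independent. Define \(\mathbf{b}_1\) independent of \(\boldsymbol{\mu}\) and \(\boldsymbol{\varepsilon}\), which makes \(X\mathbf{b}_1\) independent of them as well. The joint law then factorizes identically in both models.

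\emph{Step 3.} Substitute into \(y_i=\mu_{h[i]}+\eta_{k[i]}+\varepsilon_i\) to obtain \(y_i=\mu_{h[i]}+\mathbf{x}_{k[i]}^\top\mathbf{b}_1+\varepsilon_i\). In matrix form this is \(\mathbf{y}=Z_\mu\boldsymbol{\mu}+Z_\eta X\mathbf{b}_1+\boldsymbol{\varepsilon}\). The implied marginal covariance is \(Z_\mu\Sigma_\mu Z_\mu^\top+\tau^2 Z_\eta XX^\top Z_\eta^\top+\sigma^2 I_n\). This matches \(\Sigma_{yy}\) from Step 1 of the algorithm, which serves as a consistency check.

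The remaining remark on the three-index representation is interpretive rather than mathematical. I would address it by noting that \(h[i]\) and \(k[i]\) are arbitrary index maps. Choosing \(h\) to index coarser \((i,j)\) groups and \(k\) to index finer \((i,j,k)\) cells changes nothing in Steps 1–3. This is the usual crossed/nested random-intercept-plus-shared-random-slope structure.

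The main subtlety is the direction of the equivalence when \(XX^\top\) is singular, which happens whenever \(K>p\). In that case \(\mathcal{N}(\mathbf{0},K_\eta)\) is degenerate and lives on the column space of \(X\). The converse statement, that a given draw \(\boldsymbol{\eta}\) can be written as \(X\mathbf{b}_1\), holds only almost surely. Also, \(\mathbf{b}_1\) is not identified when \(X\) lacks full column rank. Since the proposition only claims that the model \emph{admits} the representation, equality in distribution suffices. I would state explicitly that the equivalence is in law, not pathwise uniqueness of \(\mathbf{b}_1\).
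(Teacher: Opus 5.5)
Your proposal is correct and follows the same route as the paper's proof: set $\boldsymbol{\eta}=X\mathbf{b}_1$, check that it is Gaussian with mean $\mathbf{0}$ and covariance $\tau^2XX^\top=K_\eta$, and substitute into the outcome equation while keeping the intercepts and noise unchanged. Your explicit independence of $\mathbf{b}_1$ from $(\boldsymbol{\mu},\boldsymbol{\varepsilon})$, and your remark that the equivalence holds in law (with $\mathbf{b}_1$ not identified when $XX^\top$ is singular), make precise points the paper leaves implicit.
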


\subsection{Uniform Convergence of Posterior Means}

The following lemma gives a general sufficient condition for the posterior mean to converge uniformly to the true latent function at a prescribed rate. It does not establish posterior contraction for the HGP model itself; rather, it shows how a suitable posterior-tail bound translates into a rate for the posterior mean.

\begin{lemma}[]\label{lem:posterior_mean_rate}
Let $\Pi_n$ be a posterior distribution on functions
$f:\mathcal X\to\mathbb R$, and define
\[
\hat f_n(x):=\mathbb E_{\Pi_n}[f(x)].
\]
Let $f_0$ be the true function, and suppose
$\varepsilon_n\downarrow0$. For a constant $M>0$, define
\[
A_n:=\{\|f-f_0\|_\infty\le M\varepsilon_n\}.
\]
Suppose that
\[
\mathbb E_{\Pi_n}\!\left[
\|f-f_0\|_\infty\mathbf 1_{A_n^c}
\right]
=O_{P_{f_0}}(\varepsilon_n).
\]
Then
\[
\|\hat f_n-f_0\|_\infty
=O_{P_{f_0}}(\varepsilon_n).
\]
\end{lemma}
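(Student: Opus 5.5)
The plan is a Jensen-type bound followed by splitting the posterior expectation along the event $A_n$. For each fixed $x\in\mathcal X$, linearity of the posterior expectation gives
\[
\hat f_n(x)-f_0(x)=\mathbb E_{\Pi_n}\bigl[f(x)-f_0(x)\bigr].
\]
This requires $f(x)$ to be $\Pi_n$-integrable, which I will take as implicit in the definition of $\hat f_n$. Then
\[
|\hat f_n(x)-f_0(x)|\le \mathbb E_{\Pi_n}\bigl[|f(x)-f_0(x)|\bigr]\le \mathbb E_{\Pi_n}\bigl[\|f-f_0\|_\infty\bigr].
\]
The right-hand side does not depend on $x$, so taking the supremum over $x$ yields
\[
\|\hat f_n-f_0\|_\infty\le \mathbb E_{\Pi_n}\bigl[\|f-f_0\|_\infty\bigr].
\]
This reduces the problem from a uniform statement about the posterior mean to a scalar bound on the posterior expected sup-norm error.

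Next I split the posterior expectation on $A_n$ and its complement:
\[
\mathbb E_{\Pi_n}\bigl[\|f-f_0\|_\infty\bigr]=\mathbb E_{\Pi_n}\bigl[\|f-f_0\|_\infty\mathbf 1_{A_n}\bigr]+\mathbb E_{\Pi_n}\bigl[\|f-f_0\|_\infty\mathbf 1_{A_n^c}\bigr].
\]
On $A_n$ the integrand is at most $M\varepsilon_n$, so the first term is bounded by $M\varepsilon_n\,\Pi_n(A_n)\le M\varepsilon_n$ deterministically. The second term is $O_{P_{f_0}}(\varepsilon_n)$ by hypothesis. The sum of a deterministic $O(\varepsilon_n)$ sequence and an $O_{P_{f_0}}(\varepsilon_n)$ sequence is $O_{P_{f_0}}(\varepsilon_n)$. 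To see this, given $\delta>0$, choose $C_\delta$ such that the second term exceeds $C_\delta\varepsilon_n$ with probability at most $\delta$ for all large $n$; then the total exceeds $(M+C_\delta)\varepsilon_n$ with probability at most $\delta$. This gives the conclusion.

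The main obstacle is technical rather than conceptual: measurability. First, $\|f-f_0\|_\infty$ must be a measurable function of $f$, so that $A_n$ and the posterior expectations are well defined. Second, the map $\omega\mapsto\|\hat f_n-f_0\|_\infty$ must be measurable in the data, so that the $O_{P_{f_0}}$ statement makes sense. I would handle this by assuming, as is standard, that $\Pi_n$ lives on a separable space of functions, for example continuous functions on a compact $\mathcal X$, or more generally that the supremum can be taken over a countable dense set. Under that assumption both maps are measurable. Otherwise the statement can be read with outer probability, and the argument above goes through unchanged.
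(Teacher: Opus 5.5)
Your proposal is correct and matches the paper's proof: the Jensen-type bound $\|\hat f_n-f_0\|_\infty\le \mathbb E_{\Pi_n}\|f-f_0\|_\infty$, then splitting over $A_n$ and $A_n^c$ to get $M\varepsilon_n+O_{P_{f_0}}(\varepsilon_n)=O_{P_{f_0}}(\varepsilon_n)$. Your pointwise derivation of the Jensen step and your remarks on integrability and measurability make explicit what the paper leaves implicit.
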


\begin{corollary}[]\label{cor:cor1}
Suppose the conditions of Lemma~\ref{lem:posterior_mean_rate} hold.
In addition, let $\phi=g^{-1}$ be differentiable on $\mathbb R$ and
suppose there exists a constant $L<\infty$ such that
\[
\sup_{t\in\mathbb R}|\phi'(t)|\le L.
\]
Define
\[
\mu_0(x):=\phi(f_0(x)),
\qquad
\hat\mu_n(x):=\mathbb E_{\Pi_n}[\phi(f(x))].
\]
Then
\[
\|\hat\mu_n-\mu_0\|_\infty
=O_{P_{f_0}}(\varepsilon_n).
\]
\end{corollary}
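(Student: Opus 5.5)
The proof is a pointwise Lipschitz bound followed by an application of Lemma~\ref{lem:posterior_mean_rate}; note that we do not go through $\hat f_n$ itself. Fix $x\in\mathcal X$. Since $\mu_0(x)=\phi(f_0(x))$ is deterministic given $f_0$, we can write
\[
\hat\mu_n(x)-\mu_0(x)=\mathbb E_{\Pi_n}\bigl[\phi(f(x))-\phi(f_0(x))\bigr].
\]
By the mean value theorem and $\sup_t|\phi'(t)|\le L$, we have $|\phi(f(x))-\phi(f_0(x))|\le L|f(x)-f_0(x)|\le L\|f-f_0\|_\infty$. Jensen's inequality then gives the bound
\[
\|\hat\mu_n-\mu_0\|_\infty\le L\,\mathbb E_{\Pi_n}\|f-f_0\|_\infty .
\]
This bound is uniform in $x$, because its right-hand side does not depend on $x$.

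It remains to show that $\mathbb E_{\Pi_n}\|f-f_0\|_\infty=O_{P_{f_0}}(\varepsilon_n)$. We split on $A_n$ and its complement:
\[
\mathbb E_{\Pi_n}\|f-f_0\|_\infty
=\mathbb E_{\Pi_n}\bigl[\|f-f_0\|_\infty\mathbf 1_{A_n}\bigr]
+\mathbb E_{\Pi_n}\bigl[\|f-f_0\|_\infty\mathbf 1_{A_n^c}\bigr].
\]
On $A_n$ the integrand is at most $M\varepsilon_n$, so the first term is at most $M\varepsilon_n\,\Pi_n(A_n)\le M\varepsilon_n$ deterministically. The second term is $O_{P_{f_0}}(\varepsilon_n)$ by the hypothesis of Lemma~\ref{lem:posterior_mean_rate}. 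Their sum is therefore $O_{P_{f_0}}(\varepsilon_n)$, and multiplying by the constant $L$ preserves this rate. This is the same decomposition that underlies the lemma itself; we reuse it on $\mathbb E_{\Pi_n}\|f-f_0\|_\infty$ rather than on $\|\hat f_n-f_0\|_\infty$.

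The main obstacle is a matter of rigor rather than substance. We must ensure that $\mathbb E_{\Pi_n}[\phi(f(x))]$ is well defined and that Jensen's inequality (or Tonelli's theorem) applies.
\begin{itemize}
\item Finiteness of $\mathbb E_{\Pi_n}\|f-f_0\|_\infty$ follows from the tail hypothesis, at least on events of probability tending to one. Because $\phi$ is Lipschitz, this also makes $\phi(f(x))$ integrable.
\item Measurability of $\|f-f_0\|_\infty$ under $\Pi_n$ is implicitly assumed in Lemma~\ref{lem:posterior_mean_rate}, and we adopt the same convention here.
\end{itemize}
No pointwise argument is needed to handle the nonlinearity of $\phi$: the global derivative bound converts the problem directly into the linear one already treated.
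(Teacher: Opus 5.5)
Your proposal is correct and follows essentially the same route as the paper: a mean-value/Lipschitz bound combined with Jensen's inequality gives $\|\hat\mu_n-\mu_0\|_\infty\le L\,\mathbb E_{\Pi_n}\|f-f_0\|_\infty$, and the $A_n$/$A_n^c$ split then yields the $O_{P_{f_0}}(\varepsilon_n)$ rate. Re-deriving the bound on $\mathbb E_{\Pi_n}\|f-f_0\|_\infty$ explicitly is slightly more careful than the paper, which attributes that bound to Lemma~\ref{lem:posterior_mean_rate} even though the lemma's stated conclusion concerns only $\|\hat f_n-f_0\|_\infty$; the bound you need appears only inside the lemma's proof.
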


\subsection{Connection to Survey Weighting}
\label{sec:equivalent-weights}
We now present a theorem that connects the regression function \( g^{-1}(f) \) to population-calibrated survey weights.

\begin{lemma}[]
\label{lem:ginv_remainder}
Let $\phi \equiv g^{-1}:\mathbb{R}\to\mathbb{R}$ be the inverse
link. Assume there exist an interval $I\subset\mathbb{R}$ and a
compact set $I_0\subset I^\circ$ such that
\[
f_0(\mathcal{X})\subset I_0,
\]
$\phi$ is continuously differentiable on $I$, and
\[
\sup_{t\in I}|\phi'(t)|<\infty.
\]
Let $\hat f_n$ satisfy the conclusion of Lemma~\ref{lem:posterior_mean_rate}:
\[
\|\hat f_n-f_0\|_\infty
=O_{P_{f_0}}(\varepsilon_n),
\qquad \varepsilon_n\to 0.
\]
For design points $x_1,\ldots,x_n\in\mathcal{X}$, define
\[
z_i:=f_0(x_i),\qquad
\hat z_i:=\hat f_n(x_i),\qquad
z:=(z_1,\ldots,z_n)^\top,\quad
\hat z:=(\hat z_1,\ldots,\hat z_n)^\top,
\]
and let
\[
A(\hat z):=
\operatorname{diag}\big(
\phi'(\hat z_1),\ldots,\phi'(\hat z_n)
\big).
\]
Then, with probability tending to one, the inverse-link
linearization
\[
\phi(z)-\phi(\hat z)
=A(\hat z)(z-\hat z)+r_g
\]
holds, where $\phi$ is applied componentwise and the remainder
satisfies
\[
\|r_g\|_\infty
=o_{P_{f_0}}\big(\|z-\hat z\|_\infty\big),
\qquad
\|r_g\|_\infty=o_{P_{f_0}}(1).
\]
\end{lemma}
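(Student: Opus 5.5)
The plan is to work on a high-probability event on which every $\hat z_i$ lies in a fixed compact neighbourhood of $I_0$ inside $I$. There we apply the mean value theorem componentwise and control the remainder through the uniform continuity of $\phi'$ on that compact set.

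\textbf{Step 1 (localization).} Since $I_0$ is compact and contained in the open set $I^\circ$, there is $\delta>0$ such that $I_\delta:=\{t:\operatorname{dist}(t,I_0)\le\delta\}\subset I^\circ$. The set $I_\delta$ is compact. Define the event $E_n:=\{\|\hat f_n-f_0\|_\infty\le\delta\}$. By hypothesis $\|\hat f_n-f_0\|_\infty=O_{P_{f_0}}(\varepsilon_n)$ with $\varepsilon_n\to0$, so this quantity is $o_{P_{f_0}}(1)$, and hence $P_{f_0}(E_n)\to1$. On $E_n$ we have $z_i\in I_0$ and $\hat z_i\in I_\delta$, so every segment $[z_i,\hat z_i]$ lies in the convex hull of $I_\delta$. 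That hull is a compact interval, and it is contained in $I$ because $I$ is an interval. This gives the ``with probability tending to one'' clause: on $E_n$ the identity below is well defined with $\phi\in C^1$ along every segment used. We simply \emph{define} $r_g:=\phi(z)-\phi(\hat z)-A(\hat z)(z-\hat z)$.

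\textbf{Step 2 (mean value bound).} On $E_n$ the mean value theorem gives $\phi(z_i)-\phi(\hat z_i)=\phi'(\xi_i)(z_i-\hat z_i)$ for some $\xi_i$ between $z_i$ and $\hat z_i$. Hence
\[
|r_{g,i}|=|\phi'(\xi_i)-\phi'(\hat z_i)|\,|z_i-\hat z_i|\le \omega\big(\|z-\hat z\|_\infty\big)\,\|z-\hat z\|_\infty,
\]
where $\omega$ is the modulus of continuity of $\phi'$ on the compact interval $\operatorname{conv}(I_\delta)$. Because $\phi'$ is uniformly continuous there, $\omega(s)\to0$ as $s\downarrow0$. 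The bound is uniform in $i$ and in $n$, since it does not depend on the design points.

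\textbf{Step 3 (rates).} We have $\|z-\hat z\|_\infty\le\|\hat f_n-f_0\|_\infty=o_{P_{f_0}}(1)$, and therefore $\omega(\|z-\hat z\|_\infty)=o_{P_{f_0}}(1)$. Combined with $P(E_n^c)\to0$, this yields $\|r_g\|_\infty\le o_{P}(1)\cdot\|z-\hat z\|_\infty$, which is the relative statement $\|r_g\|_\infty=o_{P_{f_0}}(\|z-\hat z\|_\infty)$ in the standard sense $\|r_g\|_\infty=\|z-\hat z\|_\infty\cdot o_P(1)$. As a crude alternative, $\|r_g\|_\infty\le 2\sup_{I}|\phi'|\,\|z-\hat z\|_\infty=O_P(\varepsilon_n)$, which gives the absolute statement $\|r_g\|_\infty=o_{P_{f_0}}(1)$ directly.

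\textbf{Main obstacle.} The only delicate point is Step 1. The sup bound on $\phi'$ over $I$ alone does not give a small remainder; we need uniform continuity of $\phi'$, and that requires a compact set. It is the assumption $I_0\subset I^\circ$ that supplies this compact buffer, keeping $\hat z_i$ inside it with high probability. A second point to handle carefully is the ambiguity of $o_P(\|z-\hat z\|_\infty)$ when $z=\hat z$. Writing the remainder as a product with an $o_P(1)$ factor, as in Step 3, removes that ambiguity.
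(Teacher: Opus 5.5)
Your proof is correct and follows essentially the same route as the paper's: you localize to a compact neighbourhood of $I_0$ inside $I^\circ$ on an event of probability tending to one, apply the mean value theorem componentwise, and bound $\|r_g\|_\infty$ by $\|z-\hat z\|_\infty\,\omega(\|z-\hat z\|_\infty)$ using the modulus of continuity of $\phi'$. Your explicit $\delta$-buffer $I_\delta$ and its convex hull make precise two points the paper leaves implicit: the compact set $J$ must contain a neighbourhood of $I_0$, so that $\hat z_i\in J$ with probability tending to one, and it must be an interval, so that the intermediate points $\xi_i$ lie in it.
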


\begin{theorem}[]
\label{thm:thm3}

Let
\[
z \equiv g(\mu)
= g\big(\mathbb E[Y\mid X]\big)\in\mathbb R^n
\]
denote the latent mean-scale vector, and let
\(f_*\in\mathbb R^K\) denote the latent GP values at the
\(K\) poststratification cells \(X_*\).
The posterior predictive mean
\[
m_*(z)\equiv \mathbb E[f_*\mid z]
=
\mu_{f_*}
+\Sigma_{f_*f}\Sigma_{ff}^{-1}(z-\mu_f)
\]
is affine in \(z\). Here, \(\Sigma_{ff}\) denotes the prior covariance matrix of the latent function values at the observed locations, and \(\Sigma_{f_*f}\) denotes the cross-covariance between the prediction and observed locations. The map \(m_*(z)\) is the Gaussian conditional mean of the prediction values given the latent vector \(z=f\), rather than the conditional mean given the observed responses. For binary outcomes, we set \(\sigma^2=0\), so \(\Sigma_{yy}=\Sigma_{ff}\) in the covariance construction. Let \(N_k^{\mathrm{pop}}\) denote the population count in cell
\(k\), and define
\[
N=\sum_{k=1}^K N_k^{\mathrm{pop}}>0,
\qquad
N^{\mathrm{pop}}
=
(N_1^{\mathrm{pop}},\ldots,N_K^{\mathrm{pop}})^\top
\in\mathbb R^K.
\]
Define the model-based poststratified estimand
\[
T(z)
\equiv
\frac{1}{N}(N^{\mathrm{pop}})^\top
g^{-1}\big(m_*(z)\big),
\]
where \(g^{-1}\) is applied componentwise.

Let \(\hat z\in\mathbb R^n\) be a fitted latent-scale vector
satisfying
\[
\|z-\hat z\|_\infty=o_{P_{f_0}}(1).
\]
Write \(\phi\equiv g^{-1}\), and define
\[
A(\hat z)
=
\operatorname{diag}\big(
\phi'(\hat z_1),\ldots,\phi'(\hat z_n)
\big)
\in\mathbb R^{n\times n}
\]
and
\[
D_*
=
\operatorname{diag}\big(
\phi'(m_{*,1}(\hat z)),\ldots,
\phi'(m_{*,K}(\hat z))
\big)
\in\mathbb R^{K\times K}.
\]

Assume the conditions of Lemma~\ref{lem:ginv_remainder}
hold for the unit-level inverse-link expansion, and suppose
\(A(\hat z)\) is nonsingular. Assume also that \(K\) is fixed,
the matrix
\[
B\equiv\Sigma_{f_*f}\Sigma_{ff}^{-1}
\]
has uniformly bounded induced infinity norm, and the relevant
cell-level latent values remain with probability tending to one
in a compact neighborhood on which \(\phi'\) is uniformly
continuous.

Define the local equivalent unit weights by requiring the
first-order directional derivatives of the population estimand
and the weighted comparison quantity
\[
S_w(z)=\frac{1}{n}w^\top\phi(z)
\]
to agree at \(\hat z\) in every direction
\(h\in\mathbb R^n\). Assume the resulting weights satisfy
\[
\frac{\|w\|_1}{n}=O_{P_{f_0}}(1).
\]

Then the local equivalent unit weights are
\[
\boxed{
w=nA(\hat z)^{-1}\nabla_z T(\hat z).
}
\]
In particular,
\[
\boxed{
w=
\frac{n}{N}A(\hat z)^{-1}
\Sigma_{ff}^{-\top}
\Sigma_{ff_*}D_*N^{\mathrm{pop}}.
}
\]
Since \(\Sigma_{ff}\) is symmetric, this is equivalently
\[
\boxed{
w=
\frac{n}{N}A(\hat z)^{-1}
\Sigma_{ff}^{-1}
\Sigma_{ff_*}D_*N^{\mathrm{pop}}.
}
\]

These weights satisfy the local survey-weight identity
\begin{equation}
\label{eq:caseB_identity}
\begin{aligned}
T(z)-T(\hat z)
&=
\frac{1}{n}w^\top
\{\phi(z)-\phi(\hat z)\} + o_{P_{f_0}}(\|z-\hat z\|_\infty).
\end{aligned}
\end{equation}
The weights are uniquely determined by the requirement that
the first-order directional derivatives agree in every direction.
\end{theorem}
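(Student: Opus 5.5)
The plan is to treat the result as a chain-rule computation followed by a matching argument, with Lemma~\ref{lem:ginv_remainder} supplying the remainder control.

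\textbf{Step 1: the gradient of $T$.} Since $m_*(z)=\mu_{f_*}+B(z-\mu_f)$ with $B=\Sigma_{f_*f}\Sigma_{ff}^{-1}$ is affine, its Jacobian is the constant matrix $B$. Apply the chain rule to $T(z)=N^{-1}(N^{\mathrm{pop}})^\top\phi(m_*(z))$, with $\phi$ acting componentwise on the cell vector. This gives
\[
\nabla_z T(\hat z)=\frac{1}{N}B^\top D_*N^{\mathrm{pop}}=\frac1N\Sigma_{ff}^{-\top}\Sigma_{ff_*}D_*N^{\mathrm{pop}}.
\]
Here I use $\Sigma_{f_*f}^\top=\Sigma_{ff_*}$.

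\textbf{Step 2: the gradient of $S_w$ and the matching.} The comparison quantity $S_w(z)=n^{-1}w^\top\phi(z)$ has gradient $n^{-1}A(\hat z)w$ at $\hat z$. Requiring $\langle\nabla T(\hat z),h\rangle=\langle n^{-1}A(\hat z)w,h\rangle$ for all $h\in\mathbb R^n$ is equivalent to $A(\hat z)w=n\nabla_zT(\hat z)$. Because $A(\hat z)$ is nonsingular and diagonal, this system has the unique solution
\[
w=nA(\hat z)^{-1}\nabla_zT(\hat z).
\]
Substituting Step 1 gives the displayed forms. Symmetry of $\Sigma_{ff}$ gives $\Sigma_{ff}^{-\top}=\Sigma_{ff}^{-1}$, and uniqueness follows from taking $h$ to be the coordinate vectors.

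\textbf{Step 3: the identity (\ref{eq:caseB_identity}).} First I would expand $T$. Write $m_*(z)-m_*(\hat z)=B(z-\hat z)$, and note $\|B(z-\hat z)\|_\infty\le\|B\|_\infty\|z-\hat z\|_\infty=o_P(1)$. So with probability tending to one both $m_*(z)$ and $m_*(\hat z)$ lie in the compact neighborhood on which $\phi'$ is uniformly continuous. Apply the mean value theorem componentwise: each cell entry equals $\phi'(\xi_k)[B(z-\hat z)]_k$ for some intermediate point $\xi_k$. Uniform continuity gives $\max_k|\phi'(\xi_k)-\phi'(m_{*,k}(\hat z))|=o_P(1)$. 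Since $K$ is fixed and $\|N^{\mathrm{pop}}\|_1/N=1$, this yields
\[
T(z)-T(\hat z)=\nabla T(\hat z)^\top(z-\hat z)+o_P(\|z-\hat z\|_\infty).
\]

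Next I would invoke Lemma~\ref{lem:ginv_remainder}, which gives $\phi(z)-\phi(\hat z)=A(\hat z)(z-\hat z)+r_g$ with $\|r_g\|_\infty=o_P(\|z-\hat z\|_\infty)$. By construction of $w$, $\nabla T(\hat z)^\top(z-\hat z)=n^{-1}w^\top A(\hat z)(z-\hat z)$. Hence
\[
\frac1n w^\top\{\phi(z)-\phi(\hat z)\}=\nabla T(\hat z)^\top(z-\hat z)+\frac1nw^\top r_g.
\]
The leftover term is bounded by $|n^{-1}w^\top r_g|\le(\|w\|_1/n)\|r_g\|_\infty=O_P(1)\,o_P(\|z-\hat z\|_\infty)$. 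Combining the two expansions gives (\ref{eq:caseB_identity}).

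\textbf{Main obstacle.} The delicate step is the remainder control in Step 3. The dimension $n$ grows, so an entrywise bound on $r_g$ does not control $w^\top r_g$ by itself. This is exactly why the hypothesis $\|w\|_1/n=O_P(1)$ is needed, and why the cell-level expansion has to exploit the fixed $K$ and the bounded $\|B\|_\infty$. I would be careful to write every $o_P$ statement as a ratio bound holding on an event of probability tending to one, handling the case $z=\hat z$ trivially.
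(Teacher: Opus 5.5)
Your proposal is correct and follows essentially the same route as the paper's proof: the chain-rule gradient $\nabla_z T(\hat z)=N^{-1}B^\top D_*N^{\mathrm{pop}}$, coefficient matching against $n^{-1}A(\hat z)w$ to get the unique $w$, a fixed-$K$ cell-level Taylor expansion controlled by $\|B\|_\infty$ and uniform continuity of $\phi'$, and Lemma~\ref{lem:ginv_remainder} with the H\"older bound $|n^{-1}w^\top r_g|\le(\|w\|_1/n)\|r_g\|_\infty$ for the identity. The only difference is that you compute the gradient before the expansion while the paper does the reverse, which is immaterial.
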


\begin{itemize}

\item \textbf{Identity (linear).}
If $g^{-1}(\eta)=\eta$, then $(g^{-1})'(\eta)=1$ for all $\eta$, so
\[
A(\hat z)=I_n,
\qquad
D_*=I_K.
\]
Hence the weights reduce to
\[
w
\;=\;
\frac{n}{N}\,
\Sigma_{ff}^{-1}\Sigma_{ff_*}\,
N^{\mathrm{pop}},
\]
which coincides with the standard linear-model local equivalent unit weights.

\item \textbf{Logit.}
If
\[
g^{-1}(\eta)=\mathrm{expit}(\eta)=\{1+\exp(-\eta)\}^{-1},
\]
define the fitted probabilities
\[
p \equiv \mathrm{expit}(\hat z)\in\mathbb{R}^n,
\qquad
p_* \equiv \mathrm{expit}(\eta_*)\in\mathbb{R}^K.
\]
Since $(g^{-1})'(\eta)=\mathrm{expit}(\eta)\{1-\mathrm{expit}(\eta)\}$, we have
\[
A(\hat z)
=
\mathrm{diag}\!\big(p_1(1-p_1),\ldots,p_n(1-p_n)\big),
\]
\[
D_*
=
\mathrm{diag}\!\big(p_{*,1}(1-p_{*,1}),\ldots,p_{*,K}(1-p_{*,K})\big),
\]
and therefore
\[
w
\;=\;
\frac{n}{N}\,
\mathrm{diag}\!\big(p\odot(1-p)\big)^{-1}
\Sigma_{ff}^{-1}\Sigma_{ff_*}\,
\mathrm{diag}\!\big(p_*\odot(1-p_*)\big)\,
N^{\mathrm{pop}},
\]
where $\odot$ denotes elementwise multiplication.

\item \textbf{Probit.}
If $g^{-1}(\eta)=\Phi(\eta)$, let $\varphi$ denote the standard normal pdf.
Then $(g^{-1})'(\eta)=\varphi(\eta)$, so
\[
A(\hat z)
=
\mathrm{diag}\!\big(\varphi(\hat z_1),\ldots,\varphi(\hat z_n)\big),
\qquad
D_*
=
\mathrm{diag}\!\big(\varphi(\eta_{*,1}),\ldots,\varphi(\eta_{*,K})\big),
\]
and
\[
w
\;=\;
\frac{n}{N}\,
\mathrm{diag}\!\big(\varphi(\hat z)\big)^{-1}
\Sigma_{ff}^{-1}\Sigma_{ff_*}\,
\mathrm{diag}\!\big(\varphi(\eta_*)\big)\,
N^{\mathrm{pop}}.
\]
\end{itemize}

The uniform convergence result in Lemma~\ref{lem:posterior_mean_rate} is related to classical posterior contraction theory for Gaussian process priors (van der Vaart and van Zanten, \citeyear{vdVvdZ2008}), which studies concentration of the posterior distribution around the true latent function under suitable regularity conditions. Whereas classical contraction results concern concentration of the posterior distribution, Lemma~\ref{lem:posterior_mean_rate} establishes a corresponding convergence rate for the posterior mean, conditional on its stated posterior-tail bound. In the present work, this result provides a foundation for analyzing the population-level poststratified estimand and, through its local linearization, deriving the corresponding local equivalent survey weights.

In particular, Lemma~\ref{lem:posterior_mean_rate} establishes that the fitted latent surface converges uniformly to the true latent function in probability. Under the additional smoothness and compactness conditions of Lemma~\ref{lem:ginv_remainder}, this convergence yields a first-order inverse-link expansion, with a remainder that is asymptotically negligible relative to the latent-scale estimation error, uniformly over the observed design points. These results provide the regularity needed to characterize the local behavior of the poststratified estimand. Building on this linearization and its other stated assumptions, Theorem~\ref{thm:thm3} represents the first-order sensitivity of the population estimand through local equivalent survey weights. Thus, Theorem~\ref{thm:thm3} translates uniform convergence of the fitted latent surface and regularity of the inverse link into a local, derivative-based representation of survey weights in structured and sparse poststratification settings.

\section{Results}

\subsection{Simulation setup}

We consider a synthetic population constructed from a full cross-classification of city (\(C=10\)), age (\(A=6\)), ethnicity (\(E=4\)), education (\(D=5\)), and marital status (\(M=2\)), yielding \(K=2{,}400\) poststratification cells in total. The population size is set to \(N_{\text{pop}}=100{,}000\). From this population, we draw a simple random sample of size \(n=2{,}000\).

The hierarchical structure consists of \(H=240\) coarser groups defined by city, age, and ethnicity. Due to sparsity induced by the high-dimensional cell structure, only a subset of the full poststratification space is represented in the sample. Let \(K_{\mathrm{obs}}\leq K\) denote the number of distinct poststratification cells observed in the sample. In our simulation, \(K_{\mathrm{obs}}=657\). Model fitting uses the observed data, while predictions for unobserved poststratification cells are informed by the hierarchical prior structure and kernel covariance. This reflects realistic survey settings in which many poststratification cells are unobserved or weakly observed.

All simulation settings are held fixed across scenarios, with variation arising only
from the data-generating mechanism governing the true cell-level outcome probabilities.

\subsection{Simulation Results}
We compare three approaches: hierarchical Gaussian process weighting (HGP), logistic regression without priors (GLM), and logistic regression estimated by maximum a posteriori inference under group-level Gaussian priors (Logit MAP). Two data-generating mechanisms are considered: a correctly specified parametric logistic model and a nonparametric mechanism with latent Gaussian process structure across poststratification cells.

Table~\ref{tab:simulation-results} reports results under two complementary evaluation regimes. Panel~A presents oracle evaluation, in which cell-level mean squared error (MSE) is weighted by the true population cell counts. Panel~B presents method-induced evaluation, in which prediction errors are weighted using each method's own estimated survey weights.

The broader motivation for model-based weighting is that practical survey-weight construction often involves several challenges, including unit nonresponse, missing data, incomplete population benchmarks, and calibration procedures such as raking. These challenges can make it difficult to construct reliable weights using conventional approaches alone, particularly when the population is divided into many sparsely observed covariate-defined cells. Our approach instead uses a structured regression model to capture shared patterns across groups and cells, with the goal of constructing weights that reflect the modeled relationships in the particular sample and target population. This model-based construction is intended as a complementary approach to traditional weighting procedures, rather than a claim that it automatically resolves every source of nonresponse or missing-data bias.

To compare predictive performance in the simulation, let $p_k$ denote the true outcome probability in poststratification cell $k$ and $\hat p_k$ its estimate. We define the population-weighted oracle MSE as
\[
\mathrm{MSE}_{\text{oracle}}
=
\frac{\sum_k N_k(\hat p_k-p_k)^2}{\sum_k N_k},
\]
and the method-induced MSE as
\[
\mathrm{MSE}_{\text{induced}}
=
\frac{\sum_k \hat w_k(\hat p_k-p_k)^2}{\sum_k \hat w_k},
\]
where \(N_k\) denotes the true population count in cell \(k\), 
\(\hat w_i\) denotes the estimated locally equivalent weight for sampled 
unit \(i\), and \(\hat w_k\) denotes the corresponding cell-level locally 
equivalent weight used to evaluate prediction errors in cell \(k\). 
We also report the unweighted MSE, obtained by averaging 
\((\hat p_k-p_k)^2\) equally across cells:
\[
\operatorname{MSE}_{\mathrm{unweighted}}
=
\frac{1}{K}\sum_{k=1}^{K}(\hat p_k-p_k)^2.
\]
The oracle criterion provides a common benchmark for predictive accuracy under the true population distribution, using population counts available by construction in the simulation. The method-induced criterion instead evaluates prediction errors under the weighting distribution generated by each method. Because each method induces its own distribution of weights, this criterion reflects both prediction errors and how weight is allocated across cells; it is therefore a complementary, method-specific diagnostic rather than a directly comparable measure of population-wide predictive accuracy.

Table~\ref{tab:simulation-results} summarizes predictive performance under both oracle and method-induced evaluation regimes. Under the parametric data-generating mechanism, logistic regression without priors achieves the lowest MSE under both unweighted and population-weighted oracle evaluation, consistent with correct parametric specification. The hierarchical GP performs comparably, while the Logit (MAP) estimator exhibits higher prediction error, reflecting the effects of regularization in this setting. 

\begin{table}[H]
\centering
\caption{Simulation results under parametric and nonparametric data-generating mechanisms. Panel A reports oracle evaluation using the true population cell counts, providing a common benchmark for predictive accuracy under the population distribution. Panel B reports method-induced evaluation, in which squared prediction errors are weighted using each method's own estimated survey weights. Panel B provides a method-specific diagnostic of prediction error under the weighting distribution induced by each method and is not directly comparable to Panel A as a measure of population-wide predictive accuracy.}
\label{tab:simulation-results}

\medskip
\textbf{Panel A: Oracle evaluation (population cell counts $N_k$)}
\medskip

\begin{tabular}{llcc}
\toprule
Scenario & Method & Unweighted MSE & Population-weighted MSE \\
\midrule
\multirow{3}{*}{Parametric}
& HGP            & 0.0036 & 0.0030 \\
& GLM (no prior) & \textbf{0.0022} & \textbf{0.0022} \\
& Logit (MAP)    & 0.0086 & 0.0084 \\
\midrule
\multirow{3}{*}{Nonparametric (GP)}
& HGP            & \textbf{0.043} & \textbf{0.035} \\
& GLM (no prior) & 0.049 & 0.049 \\
& Logit (MAP)    & 0.056 & 0.050 \\
\bottomrule
\end{tabular}

\medskip
\textbf{Panel B: Method-induced evaluation (locally equivalent weights $\hat w_k$)}
\medskip

\begin{tabular}{llc}
\toprule
Scenario & Method & MSE using own estimated weights \\
\midrule
\multirow{3}{*}{Parametric}
& HGP            & 0.0025 \\
& GLM (no prior) & \textbf{0.0021} \\
& Logit (MAP)    & 0.0081 \\
\midrule
\multirow{3}{*}{Nonparametric (GP)}
& HGP            & \textbf{0.028} \\
& GLM (no prior) & 0.044 \\
& Logit (MAP)    & 0.033 \\
\bottomrule
\end{tabular}
\end{table}

Under the nonparametric (GP) data-generating mechanism, the hierarchical GP achieves the lowest MSE under both oracle criteria. In particular, its population-weighted MSE is 0.035, compared with 0.049 for GLM and 0.050 for Logit (MAP). Because all methods are evaluated using the same true population cell counts, the oracle criteria provide a common basis for comparing predictive accuracy under the population distribution. These results suggest that the hierarchical GP better captures the latent structure in this simulation setting, particularly when prediction errors are weighted by population cell frequencies.

Panel~B provides a complementary, method-induced evaluation by weighting prediction errors according to each method's own locally equivalent weights. Unlike the oracle criteria in Panel~A, which evaluate all methods under the same true population distribution, this criterion reflects prediction errors under the weighting scheme induced by each method. Under the parametric mechanism, GLM achieves the lowest method-induced MSE (0.0021), followed closely by HGP (0.0025), whereas Logit (MAP) yields a substantially higher value (0.0081). This pattern suggests that, when the logistic specification is correctly specified, the GLM's weighting scheme performs particularly well under this method-specific criterion. 

Under the nonparametric mechanism, HGP achieves the lowest method-induced MSE (0.028), compared with 0.033 for Logit (MAP) and 0.044 for GLM. The reversal in the relative performance of GLM and HGP across the two mechanisms is consistent with the advantage of HGP when the underlying structure departs from the parametric logistic specification. 

Taken together, these results show that the method-induced evaluation depends on the data-generating mechanism and the weights each method constructs. Because each method is evaluated using its own weights, however, a lower method-induced MSE does not necessarily imply better predictive accuracy across the entire population; Panel~A remains the common benchmark for that comparison.

These method-induced MSE values answer a different question from the oracle criteria in Panel~A. Because each method's prediction errors are evaluated under its own estimated weighting distribution, the resulting values reflect both prediction errors and the distribution of weight assigned across cells. A lower method-induced MSE therefore does not necessarily imply better predictive accuracy under the true population distribution; it may also arise because a method assigns less weight to cells where its predictions are less accurate. Panel~B should consequently be interpreted as a method-specific diagnostic of prediction error under the induced weighting distribution, rather than as an independent measure of population-wide predictive accuracy.

The comparison between the two panels is informative because it distinguishes prediction quality under a common population benchmark from prediction error under the weighting distribution generated by each method. In the nonparametric setting, HGP performs best under both criteria, providing complementary evidence of its performance in this simulation. Nevertheless, the oracle results in Panel~A provide the primary evidence for its advantage in population-weighted predictive accuracy, while Panel~B describes how prediction errors behave under the corresponding locally equivalent weights.

The Logit (MAP) results also illustrate that regularization does not uniformly improve predictive performance. Although the prior structure can stabilize estimation, particularly in sparse settings, the MAP estimator does not attain the lowest MSE under either oracle criterion in these simulations. Its method-induced MSE is also higher than that of HGP under the nonparametric mechanism. These findings suggest that regularization alone does not fully capture the nonparametric structure represented by the data-generating process.

To examine sensitivity to sample size and eligible frame coverage, we
further evaluate population-weighted cell-level prediction error under the
nonparametric GP data-generating mechanism. Figure~\ref{fig:prediction-performance}
presents the median population-weighted MSE across simulation replicates as
sample size and eligible frame size vary. Panel~(a) varies sample size while
holding $K_{\mathrm{frame}}=1800$ fixed; Panel~(b) varies $K_{\mathrm{frame}}$.
HGP attains the lowest median MSE across all settings considered. Its
advantage persists as sample size and frame size vary, although the error
does not decrease monotonically with frame size.

Figure~\ref{fig:prediction-performance} illustrates the predictive advantage of HGP under the nonparametric GP data-generating mechanism. In the left panel, HGP attains the lowest median population-weighted MSE at every sample size considered. Prediction error decreases as the sample size increases for all three methods, with particularly pronounced improvements for GLM and Logit MAP. These results demonstrate the practical importance of sample size for cell-level prediction in the settings examined, while HGP maintains its advantage across the sample sizes considered. Because fitting the HGP model is computationally demanding, the sensitivity analysis considers smaller sample sizes ($n \in \{100,300,500\}$) than the main simulation ($n=2{,}000$), with 10 simulation replicates per setting. This design balances computational cost with an assessment of how predictive performance varies with sample size and eligible frame size.

\begin{figure}[H]
\centering
\includegraphics[width=0.98\textwidth]
{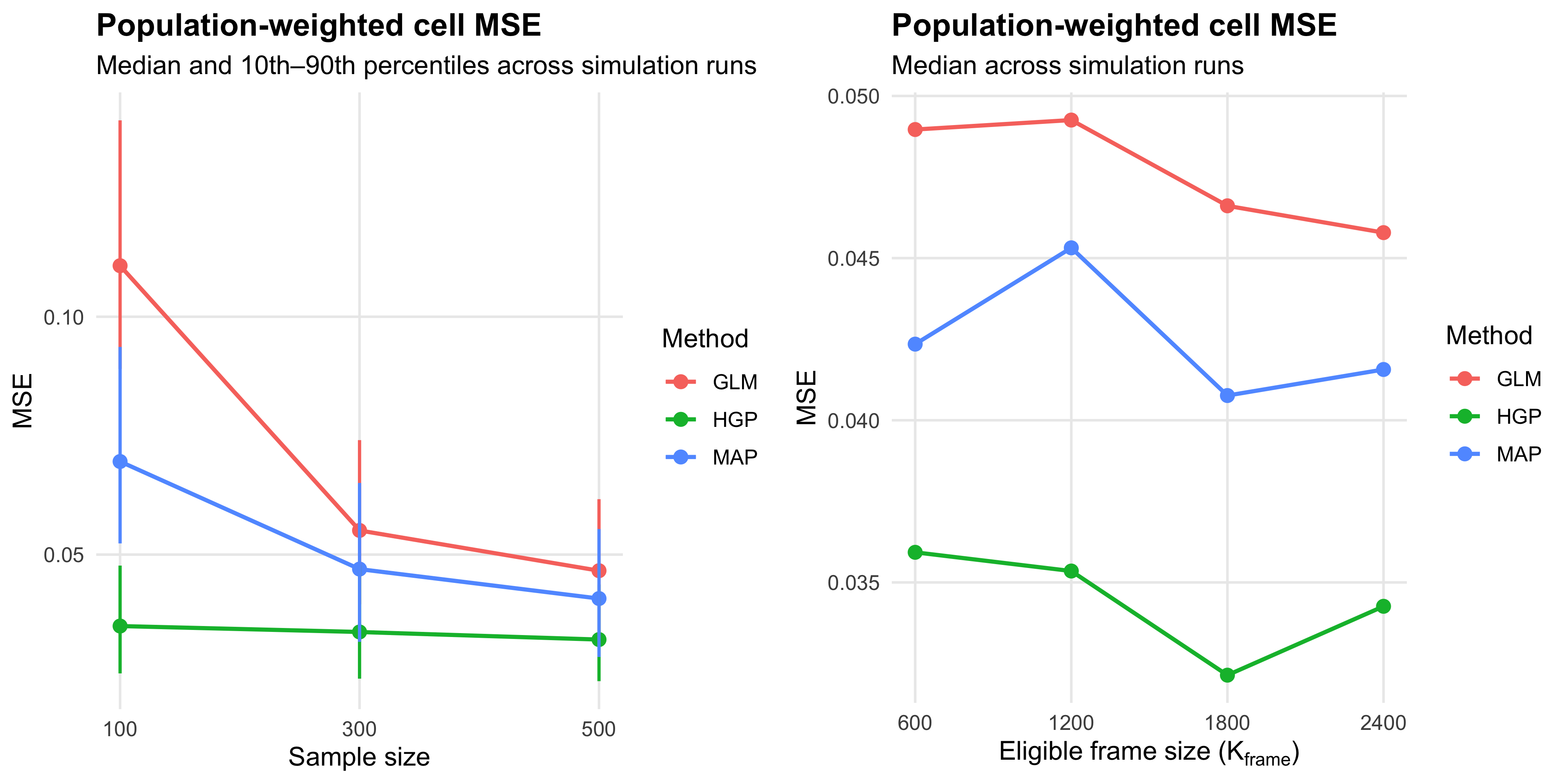}
\caption{Population-weighted prediction error under varying sample sizes
and eligible frame sizes in the nonparametric GP scenario. The left panel
reports the median and 10th--90th percentile intervals across simulation
replicates as sample size varies, with $K_{\mathrm{frame}}=1800$.
The right panel reports the median across replicates as eligible frame
size varies. Both panels use the true population counts $N_k$ to calculate
population-weighted cell-level MSE. Lower values indicate better predictive
accuracy under the population distribution.}
\label{fig:prediction-performance}
\end{figure}

The right panel examines the effect of eligible frame size at fixed sample size at $n=500$. Unlike the consistent improvement observed as sample size increases, enlarging the eligible frame does not produce a monotonic reduction in prediction error for any method. HGP nevertheless maintains the lowest median MSE across all frame sizes considered. This comparison suggests that, in the settings examined, increasing the sample size is more consistently associated with improved predictive accuracy than expanding the eligible frame alone. A larger eligible frame does not necessarily provide additional observed information for estimating cell-level outcome probabilities.


\subsection{Visual diagnostics by $H$-groups under alternative weighting schemes}
\label{subsec:hk-diagnostics}

To complement the population-weighted MSE summaries in Table~\ref{tab:simulation-results},
we examine prediction accuracy at the $H$-group level under two alternative
weighting schemes. The full poststratification partition consists of
$K$-cells defined by city $\times$ age $\times$ ethnicity $\times$
education $\times$ marital status, whereas the coarser $H$-groups are
defined by city $\times$ age $\times$ ethnicity and correspond to the
grouping used for the hierarchical prior on group-specific intercept
offsets. Each $H$-group contains multiple $K$-cells.

We display the 50 $H$-groups with the largest sample sizes, ranked in
descending order of sample size. The first diagnostic uses the true
population cell counts $N_k$ to aggregate both population and predicted
cell rates. The second retains the same population rates as a benchmark
but aggregates predicted cell rates using method-specific estimated
weights $\hat w_k$. Comparing these diagnostics allows us to distinguish
agreement with population-weighted subgroup rates from the local
subgroup rates implied by each method's estimated weighting scheme.
Because the figures focus on the largest-sample $H$-groups, they provide
a descriptive comparison in relatively well-sampled regions rather than
a comprehensive assessment of prediction accuracy across the full
poststratification space.

\paragraph{Population-weighted benchmark.}
Figure~\ref{fig:topH} compares population and predicted rates for the
top 50 $H$-groups, using population cell counts for both series. For
an $H$-group $h$, the population rate and the corresponding predicted
rate are
\[
p_h^{(N)}
=
\frac{\sum_{k\in\mathcal K(h)} N_k p_k}
     {\sum_{k\in\mathcal K(h)} N_k},
\qquad
\hat p_h^{(N)}
=
\frac{\sum_{k\in\mathcal K(h)} N_k \hat p_k}
     {\sum_{k\in\mathcal K(h)} N_k},
\]
where $\mathcal K(h)$ denotes the set of $K$-cells belonging to
$H$-group $h$, $p_k$ is the true population rate, and $\hat p_k$ is
the method-specific predicted rate. Thus, both the red and blue curves
use the same population weights $N_k$, providing a common benchmark
across the three methods.

In the parametric scenario (top row of Figure~\ref{fig:topH}), all
three methods generally reproduce the broad pattern of population
rates across the displayed $H$-groups. Logistic regression without
priors exhibits close agreement in many groups, consistent with its
favorable population-weighted MSE under correct parametric
specification. The hierarchical GP also tracks the population rates
reasonably well, while MAP Logit shows more noticeable discrepancies
in some groups.

In the nonparametric GP scenario (bottom row of Figure~\ref{fig:topH}),
the group-level rates exhibit greater variability, and discrepancies
between population and predicted rates are more apparent. The
hierarchical GP captures several prominent fluctuations, although
some group-level differences remain. Logistic regression misses
some peaks and troughs, while MAP Logit captures certain local
fluctuations but deviates substantially in others. These comparisons
illustrate differences in the recovery of population-weighted subgroup
rates under nonparametric structure, without establishing uniform
dominance by any one method across all groups.

\begin{figure}[H]
\centering
\includegraphics[width=0.98\textwidth]{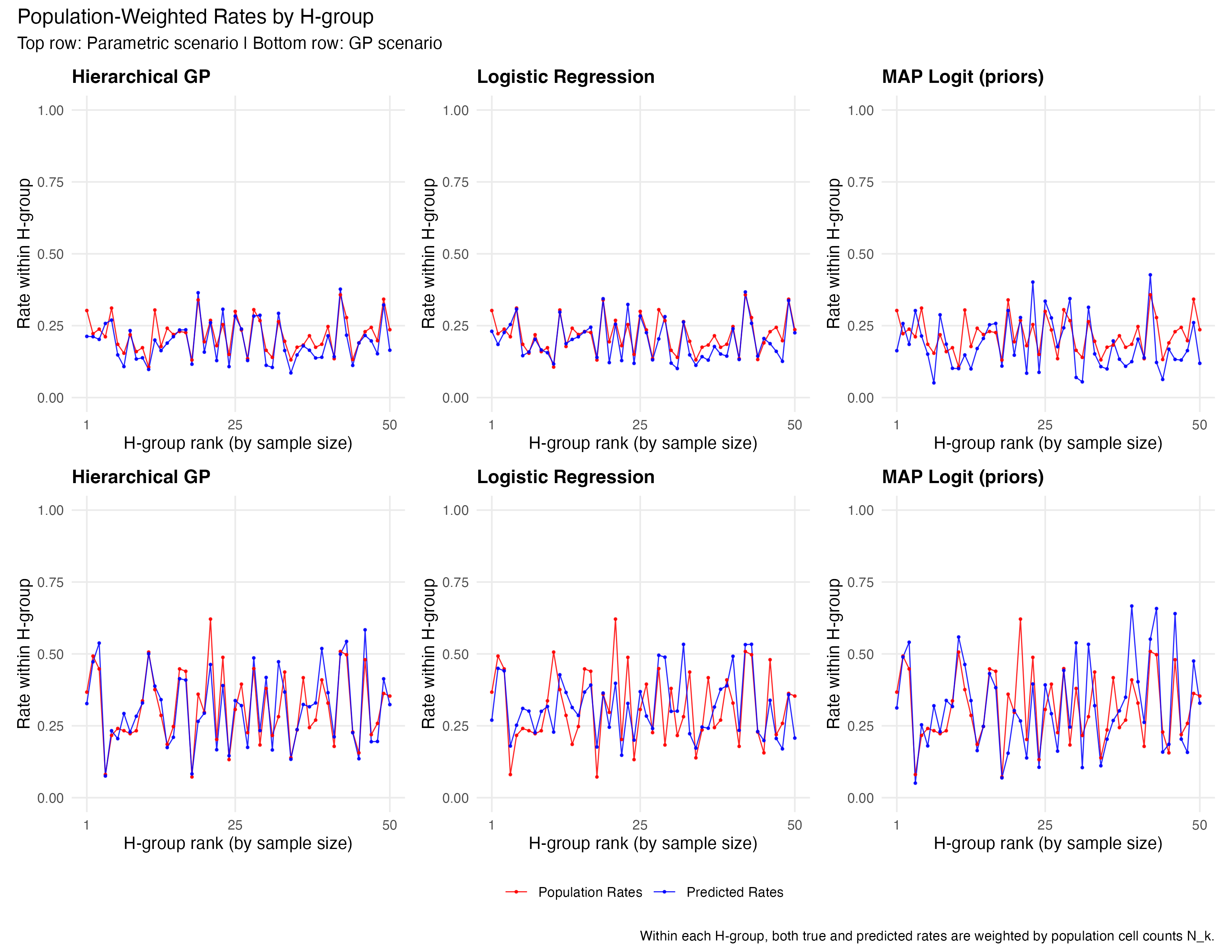}
\caption{Population-weighted benchmark for the top 50 $H$-groups ranked
by sample size. Red denotes the true population rate, and blue denotes
the predicted rate. Both rates are aggregated across constituent
$K$-cells using the true population cell counts $N_k$. The top row
shows the parametric scenario, and the bottom row shows the nonparametric
GP scenario.}
\label{fig:topH}
\end{figure}

Figure~\ref{fig:topH-weighted} uses the same $H$-groups, ordering,
population rates, and true population cell counts as Figure~\ref{fig:topH}.
The difference is that predicted rates are now aggregated using
method-specific estimated cell weights $\hat w_k$ rather than $N_k$.
Specifically, the estimated-weight-adjusted rate for $H$-group $h$ is
\[
\hat p_h^{(w)}
=
\frac{\sum_{k\in\mathcal K(h)} \hat w_k \hat p_k}
     {\sum_{k\in\mathcal K(h)} \hat w_k},
\]
where $\hat w_k$ denotes the estimated cell-level weight associated
with method-specific local equivalent weights. The red curve remains
$p_h^{(N)}$ as defined above; only the aggregation of predicted rates
changes. Consequently, differences between the blue curves in
Figures~\ref{fig:topH} and~\ref{fig:topH-weighted} reflect the change
from population-count weighting to estimated-weight weighting, rather
than a change in the population benchmark or the set of displayed
groups.

\begin{figure}[H]
\centering
\includegraphics[width=0.98\textwidth]{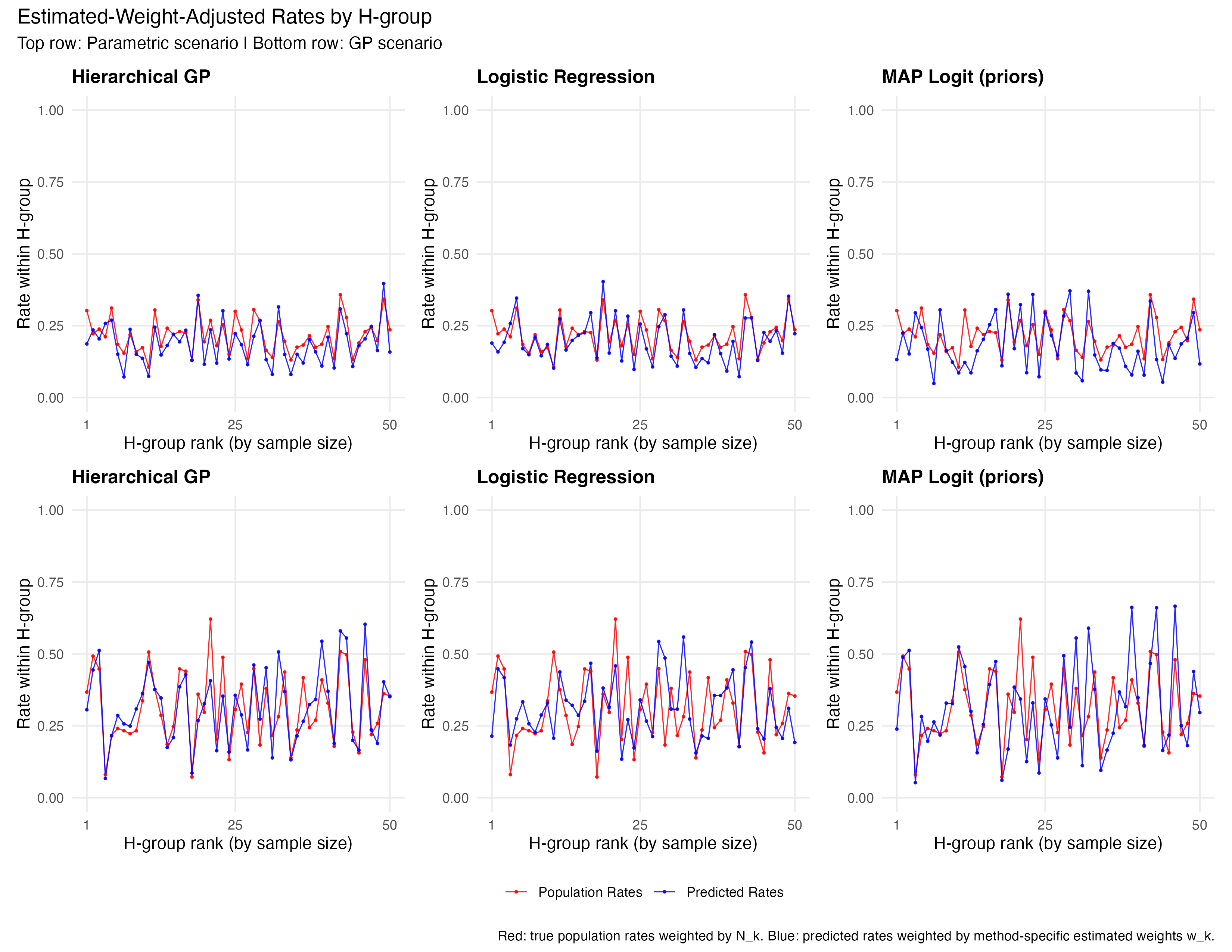}
\caption{Estimated-weight-adjusted predicted rates for the same top
50 $H$-groups ranked by sample size as in Figure~\ref{fig:topH}.
Red denotes the true population rate, aggregated using population
cell counts $N_k$. Blue denotes the predicted rate, aggregated using
method-specific estimated cell weights $\hat w_k$. The top row shows
the parametric scenario, and the bottom row shows the nonparametric
GP scenario. The red curves are identical to those in
Figure~\ref{fig:topH}; differences in the blue curves reflect the
alternative aggregation of predicted cell rates.}
\label{fig:topH-weighted}
\end{figure}

In the parametric scenario (top row of Figure~\ref{fig:topH-weighted}),
logistic regression without priors continues to show close agreement
with the population rates across many groups. The hierarchical GP
also retains a broadly similar pattern under the two weighting
schemes, with relatively small differences in most groups. MAP
Logit, however, exhibits more noticeable discrepancies from the
population benchmark, even though its overall pattern remains
similar under population-count and estimated-weight aggregation.
Thus, changing the aggregation scheme generally preserves the broad
patterns of subgroup rates, but does not eliminate the local
discrepancies between the predicted rates and the population benchmarks.

In the nonparametric GP scenario (bottom row of
Figure~\ref{fig:topH-weighted}), the differences between methods
remain more apparent, reflecting the greater variability in the
underlying population rates. The predicted rates under estimated-weight
aggregation remain broadly similar to those obtained using population
counts, although some local differences are visible. The hierarchical
GP captures several prominent group-level fluctuations, while
logistic regression and MAP Logit exhibit discrepancies in some
groups. Overall, changing the aggregation weights produces relatively
small differences in the displayed subgroup rates, without
substantially changing their broad patterns.

Taken together, Figures~\ref{fig:topH} and~\ref{fig:topH-weighted}
compare two ways of aggregating predicted cell rates within the same
$H$-groups. Figure~\ref{fig:topH} uses population cell counts $N_k$
for both the true and predicted rates, whereas
Figure~\ref{fig:topH-weighted} retains the same population benchmark
but uses method-specific estimated weights $\hat w_k$ for the
predicted rates. The red curves are therefore identical, while the
blue curves differ slightly where the estimated weights change the
relative contributions of constituent cells. The close agreement
between the two sets of blue curves suggests that the resulting
subgroup-level patterns are generally similar under these two
aggregation schemes in the displayed groups. These diagnostics
complement the population-weighted MSE results in
Table~\ref{tab:simulation-results}, but do not establish that the
estimated weights improve prediction or that the same degree of
agreement holds across the full poststratification population.

\subsection{Population-mean agreement and local equivalent weights}
\label{subsec:mean-agreement}

The MSE summaries in Table~\ref{tab:simulation-results} and the
group- and cell-level diagnostics in Figures~\ref{fig:topH} and
\ref{fig:topH-weighted} assess predictive accuracy across the poststratification
space. We complement these comparisons with two diagnostics of
aggregate performance: agreement between the true population mean and
the model-implied weighted mean, and absolute population-mean
prediction error as sample size varies. These quantities address
different aspects of performance and should be distinguished when
evaluating the proposed locally equivalent weights.

Let
\[
\bar p^{\mathrm{pop}}
=
\frac{\sum_k N_k p_k}{\sum_k N_k}
\qquad\text{and}\qquad
\bar p^{w}
=
\frac{\sum_k \hat w_k \hat p_k}{\sum_k \hat w_k},
\]
where $N_k$ denotes the population size of cell $k$, $p_k$ is its
true population rate, $\hat p_k$ is its predicted rate, and $\hat w_k$
is the estimated cell-level weight. The first quantity is the true
population mean, whereas the second is the weighted mean implied by
the estimated weights and predicted cell rates. We summarize their
discrepancy using the signed mean gap
\[
\Delta_{\mathrm{mean}}
=
\bar p^{\mathrm{pop}}-\bar p^w
\]
and the absolute relative gap
\[
\Delta_{\mathrm{rel}}
=
\left|
\frac{\bar p^{\mathrm{pop}}-\bar p^w}
{\bar p^{\mathrm{pop}}}
\right|.
\]
The signed gap indicates the direction of the discrepancy, while the
absolute relative gap expresses its magnitude relative to the true
population mean.

\begin{table}[H]
\centering
\caption{Population-mean agreement under parametric and nonparametric
data-generating mechanisms. The table compares the true population mean
$\bar p^{\mathrm{pop}}$ with the model-implied weighted mean
$\bar p^w$. Relative gaps are reported in absolute value.}
\label{tab:mean-gap}
\begin{tabular}{llccc}
\toprule
Scenario & Method & $\bar p^{\mathrm{pop}}$ & $\bar p^w$ &
Absolute relative gap (\%) \\
\midrule
\multirow{3}{*}{Parametric}
& HGP            & 0.226 & 0.201 & 11.1 \\
& GLM (no prior) & 0.226 & 0.221 & 2.5 \\
& Logit (MAP)    & 0.226 & 0.212 & 6.3 \\
\midrule
\multirow{3}{*}{Nonparametric (GP)}
& HGP            & 0.323 & 0.333 & 3.0 \\
& GLM (no prior) & 0.323 & 0.319 & 1.2 \\
& Logit (MAP)    & 0.323 & 0.314 & 3.0 \\
\bottomrule
\end{tabular}
\end{table}

Table~\ref{tab:mean-gap} shows that aggregate agreement varies across
methods and data-generating mechanisms. Under the parametric mechanism,
logistic regression without priors has the smallest absolute relative
gap (2.5\%), followed by MAP Logit (6.3\%) and HGP (11.1\%). Under the
nonparametric GP mechanism, the relative gaps are smaller for all three
methods: 3.0\% for HGP, 1.2\% for logistic regression without priors,
and 3.0\% for MAP Logit. These results illustrate that improved prediction across a flexible population structure does not necessarily translate into the smallest discrepancy in global mean estimation. Although HGP is designed to accommodate more complex population structure, logistic regression without priors achieves closer agreement for the global mean in both scenarios considered. These results illustrate that agreement for a single aggregate functional need not
coincide with the ranking of methods according to cell-level or
population-weighted prediction error.

We next examine how aggregate predictive accuracy changes with sample
size under the nonparametric GP data-generating mechanism. Figure
\ref{fig:popmean-error} reports the absolute difference between the
true population mean and the population-weighted mean of the predicted
cell rates:
\[
\left|
\widehat{\bar p}^{\,\mathrm{pop}}
-
\bar p^{\mathrm{pop}}
\right|
=
\left|
\frac{\sum_k N_k\hat p_k}{\sum_k N_k}
-
\frac{\sum_k N_k p_k}{\sum_k N_k}
\right|.
\]
Unlike the preceding table, this diagnostic uses the true population
counts $N_k$ to aggregate predicted rates and does not use the
estimated survey weights $\hat w_k$ in the aggregation.

\begin{figure}[H]
\centering
\includegraphics[width=0.7\textwidth]{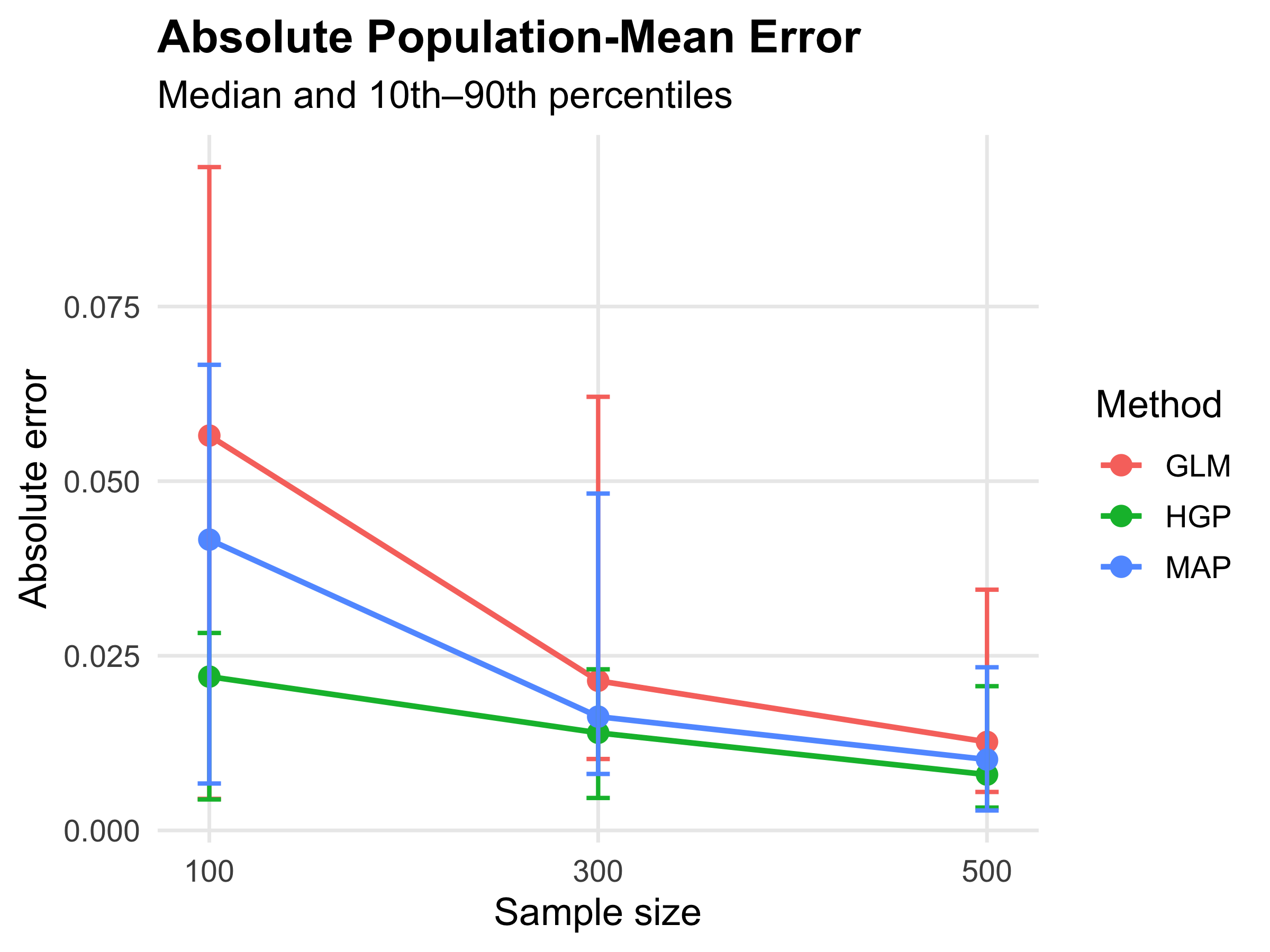}
\caption{Absolute population-mean prediction error under the
nonparametric GP data-generating mechanism as sample size varies.
Points show medians across simulation replicates, and vertical bars
indicate the 10th--90th percentiles. Population means are computed
using the true population counts $N_k$, with
$K_{\mathrm{frame}}=1800$.}
\label{fig:popmean-error}
\end{figure}

As shown in Figure~\ref{fig:popmean-error}, the median absolute population-mean prediction error decreases with sample size for all three methods. Using the true population counts $N_k$ as the benchmark, HGP achieves the lowest median error at each sample size considered ($n=100$, $300$, and $500$). This pattern highlights the advantage of HGP for recovering the population mean under the nonparametric GP data-generating mechanism, where its flexible representation of population structure translates into more accurate population-weighted predictions. The percentile intervals also indicate that aggregate prediction error varies across simulation replicates, particularly at the smaller sample sizes.

Taken together, the two diagnostics highlight the distinction between
weighted-mean agreement and population-mean prediction accuracy.
The former evaluates the mean implied by the estimated weights and
predicted rates, whereas the latter evaluates the aggregation of
predicted rates under the known population distribution. Their
rankings need not coincide because they use different weighting
schemes and therefore measure different properties of the fitted
methods. Alongside the MSE summaries and the group- and cell-level
plots, these results provide a more complete assessment of the
proposed HGP framework. In particular, the lower median aggregate
prediction errors observed for HGP under the nonparametric GP data-generating mechanism support
its usefulness in this setting, while the mean-gap results show that
improved prediction does not automatically imply closer agreement
for every weight-induced aggregate.

\subsection{Application to FFCWS Data}

We apply the proposed hierarchical Gaussian process (HGP) weighting framework to data
from the Future of Families and Child Wellbeing Study (FFCWS), a longitudinal birth cohort
study of children born in large U.S.\ cities between 1998 and 2000 (Reichman et al., \citeyear{ReichmanTeitlerGarfinkelMcLanahan2001}). A detailed description
of the survey design, sampling scheme, and construction of poststratification cells is
provided in our earlier work (Lee, \citeyear{Lee2026}). Here we briefly summarize the key features
relevant for the present analysis.

The FFCWS design induces a high-dimensional poststratification structure formed by cross-classifying maternal marital status, race/ethnicity, age, education, and city of birth (Carlson, \citeyear{Carlson2008}). This cross-classification yields \(K=3{,}840\) possible poststratification cells. Population information is available for \(3{,}339\) of these cells. We restrict attention to the \(n=3{,}424\) sampled observations whose corresponding cells have known population counts, covering \(K_{\mathrm{obs}}=1{,}059\) distinct poststratification cells within the population-supported subset. Thus, many population-supported cells are weakly observed or entirely unobserved in the sample.

Consistent with the simulation study, we incorporate reliable external benchmarks at a coarser level of aggregation rather than at the full \(K\)-cell resolution. Specifically, population-level information is incorporated for \(H=303\) hierarchical groups defined by race/ethnicity, age, and city, rather than separately for each of the \(1{,}059\) observed cells. This choice reflects common challenges in applied survey settings, where outcome information may be unavailable or unreliable at fine levels of stratification, necessitating the use of auxiliary information from external sources at aggregated resolutions. The hierarchical GP framework accommodates this structure by propagating information from coarser groups to finer cells through structured partial pooling.

For the outcome variable, we use a binary indicator of father interview participation in the FFCWS study. This indicator records whether a father was interviewed and is observed for every record in the analytic sample, including cases in which the father was not interviewed. We use this response indicator to model variation in survey participation across poststratification cells and to construct model-based survey weights; it is not treated as a substantive outcome of interest. The resulting weights are intended to account for differences in interview participation across observed covariate groups. The poststratification covariates consist exclusively of ordinal and nominal
categorical variables: ordinal variables are treated as numerical inputs, while nominal
variables are encoded using one-hot binary indicators (Garrido-Merch\'an and Hern\'andez-Lobato, \citeyear{GarridoMerchanHernandezLobato2020}). This representation facilitates construction of the squared exponential kernel used in the Gaussian process component. Together, the high dimensionality of the poststratification space, uneven
sample support across cells, and partial population coverage make FFCWS a demanding
environment for survey weighting. These features render it a natural empirical test case
for hierarchical Gaussian process smoothing, which is designed to borrow strength across
related cells while maintaining local adaptivity.

Figure~\ref{fig:final_plot_H} compares population benchmark rates
with rates obtained under two weighting schemes in the FFCWS
application. Both rows display the same $H$-groups, ordered by sample
size. The top row compares the population benchmark rates with
predicted rates aggregated using population cell counts $N_k$.
The bottom row retains the same population benchmark but uses
method-specific estimated weights $w_k$ to obtain the corresponding
weighted rates. This comparison allows us to examine whether the
local subgroup patterns change when population-count weighting is
replaced by estimated-weight weighting.

```latex
In the top row of Figure~\ref{fig:final_plot_H}, all three methods
capture aspects of the broad pattern of population benchmark rates
across the displayed $H$-groups, although discrepancies remain.
HGP and MAP Logit reproduce several local fluctuations, while
logistic regression shows more noticeable deviations in some groups.
Overall, the population-count-weighted predictions capture the
general pattern but do not consistently reproduce the benchmark
rates across groups.

\begin{figure}[H]
\centering
\includegraphics[width=0.98\textwidth]{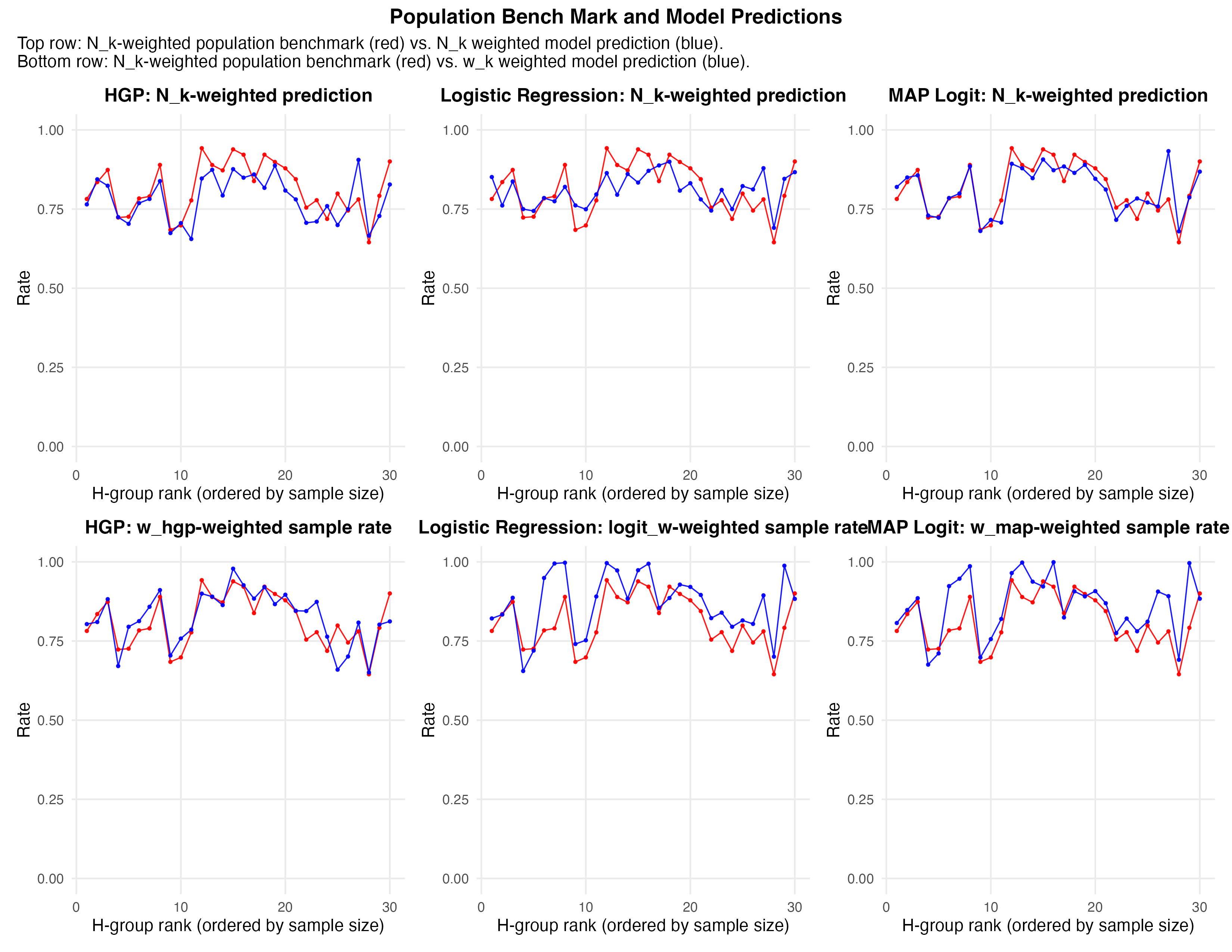}
\caption{Population benchmark and model-implied rates for the
top 30 $H$-groups ranked by sample size in the FFCWS application.
The top row compares the population benchmark rates (red) with
predicted rates (blue) aggregated using population cell counts
$N_k$. The bottom row retains the same population benchmark
(red) and compares it with rates obtained using method-specific
estimated weights $w_k$ (blue). Results are shown for the
hierarchical Gaussian process (HGP), logistic regression without
priors, and MAP Logit.}
\label{fig:final_plot_H}
\end{figure}

The bottom row examines the corresponding rates under estimated-weight
weighting. HGP shows the closest overall correspondence to the
population benchmark among the three methods, capturing many of the
local fluctuations with relatively small discrepancies in most
displayed groups. In contrast, logistic regression exhibits more
pronounced local departures, including several groups where the
estimated-weight-weighted rates approach or reach one. MAP Logit
also shows noticeable discrepancies in some groups, despite
capturing aspects of the broad pattern. These comparisons suggest
that HGP provides a closer representation of local subgroup rates
under estimated-weight weighting in this application, although
the degree of agreement varies across groups and methods.

The comparison between the two rows is informative because the red
population benchmark rates are unchanged. Differences between the
blue curves therefore reflect the change in the weighting scheme,
rather than a change in the benchmark or the displayed $H$-groups.
The results suggest that estimated-weight weighting can preserve
the broad subgroup-level pattern while producing local differences
in some groups. However, these visual comparisons alone do not
establish whether one weighting scheme yields more accurate
population estimates across the full poststratification space.

We next compare global mean agreement in the FFCWS application.
Table~\ref{tab:ffcws-mean-gap} reports the benchmark population mean
$\bar p^{\mathrm{pop}}$ and the model-implied weighted mean
$\bar p^w$ under each method. The relative gap is reported in
absolute value.

\begin{table}[H]
\centering
\caption{Population mean agreement in the FFCWS application.
The table compares the benchmark population mean
$\bar p^{\mathrm{pop}}$ with the model-implied weighted mean
$\bar p^w$. Relative gaps are reported in absolute value.}
\label{tab:ffcws-mean-gap}
\begin{tabular}{lccc}
\toprule
Method & $\bar p^{\mathrm{pop}}$ & $\bar p^w$ &
Absolute relative gap (\%) \\
\midrule
HGP              & 0.793 & 0.811 & 2.2 \\
GLM (no prior)   & 0.793 & 0.832 & 4.9 \\
Logit (MAP)      & 0.793 & 0.811 & 2.2 \\
\bottomrule
\end{tabular}
\end{table}

Table~\ref{tab:ffcws-mean-gap} shows that all three methods yield
model-implied weighted means relatively close to the benchmark
population mean. HGP and MAP Logit have the smallest absolute
relative gaps, both at 2.2\%, compared with 4.9\% for logistic
regression without priors. Thus, HGP and MAP Logit exhibit closer
agreement with the benchmark for this particular global functional.
This finding complements the subgroup-level diagnostics, which
reveal how the choice of weighting scheme affects local rates.
Nevertheless, agreement in the global mean does not necessarily
imply equally accurate rates within individual $H$-groups or
across the full poststratification space.

Taken together, the FFCWS application illustrates the importance
of evaluating weighting methods at both local and aggregate levels.
The subgroup-level comparisons reveal how predicted or weighted
rates vary across $H$-groups and how these patterns respond to
alternative weighting schemes, while the mean-gap analysis
summarizes agreement for a single population-level quantity.
These diagnostics provide complementary evidence about the behavior
of the proposed HGP-based weighting procedure in an empirical
survey setting.

\section{Discussion}

This paper develops a model-based survey-weighting framework using hierarchical Gaussian process (HGP) regression for structured populations with sparse poststratification cells. By combining hierarchical group effects with a kernel-based component, the framework accommodates nonlinear variation while borrowing information across related cells. The structured covariance representation also reduces the computational cost of posterior inference when the effective dimension of the model is substantially smaller than the sample size.

A central contribution is the extension of local equivalent-weight construction to the HGP setting. Rather than representing a nonlinear poststratified estimator as an exact weighted sample mean, we derive weights by matching its first-order sensitivity to that of a weighted sample-level quantity around the fitted latent-scale vector. The resulting weights provide a local, derivative-based representation of the model-based population estimand. Our theoretical results establish uniform convergence of the posterior mean under a stated posterior-tail condition and characterize the local equivalent weights under additional regularity assumptions. These results provide a foundation for the construction, but do not by themselves establish posterior contraction for the HGP model or guarantee stability of the weights without the accompanying assumptions.

The simulation results show that the relative performance of HGP and logistic regression depends on the underlying data-generating mechanism. Under correct parametric specification, unregularized logistic regression achieves the lowest population-weighted prediction error. Under the nonparametric Gaussian process mechanism, HGP achieves lower population-weighted prediction error than both logistic regression alternatives. Its advantage persists across the sample sizes and eligible-frame sizes examined. The subgroup- and cell-level diagnostics further illustrate the value of evaluating predictions at multiple resolutions, since agreement at a coarser group level may conceal discrepancies among individual poststratification cells.

The results also distinguish predictive accuracy from agreement in a global population mean. Although HGP performs well under the nonparametric simulation mechanism, it does not uniformly yield the smallest gap between the model-implied weighted mean and the population benchmark. The FFCWS application similarly shows that all three methods capture broad patterns in the population benchmarks, with differences across groups and cells; HGP and MAP Logit have the smallest reported relative gaps for the global mean. These findings underscore that no single diagnostic fully characterizes a weighting method: cell-level prediction, population-weighted error, and agreement for a particular aggregate estimand address related but distinct objectives.

Several limitations motivate further work. The theoretical results rely on stated convergence and regularity conditions, and the simulations cover a limited set of population structures and sampling settings. The empirical analysis also uses available population benchmarks and a particular participation indicator, so its findings should not be interpreted as evidence that HGP automatically eliminates nonresponse bias or resolves incomplete population coverage. Future work could investigate broader sampling and missingness mechanisms, the sensitivity of the resulting weights to model and kernel choices, and the use of these weights for downstream estimation and uncertainty quantification.

Overall, HGP provides a flexible extension of local equivalent-weight construction for structured populations. Its principal advantage in the settings studied is improved population-weighted prediction when the underlying structure departs from a parametric logistic specification. The results also highlight the importance of assessing predictive performance and aggregate estimation separately when evaluating model-based survey weights.

\appendix
\section{Proofs}
\label{app:proofs}

\singlespacing

\subsection{Proof of Theorem 1}

\begin{proof}
\setlength{\abovedisplayskip}{4pt plus 1pt minus 1pt}
\setlength{\belowdisplayskip}{4pt plus 1pt minus 1pt}
\setlength{\abovedisplayshortskip}{2pt plus 1pt minus 1pt}
\setlength{\belowdisplayshortskip}{3pt plus 1pt minus 1pt}
\setlength{\jot}{2pt}

By the hierarchical Gaussian process prior, the latent function values at the observed and prediction locations are jointly Gaussian. In particular, for the latent function values $\mathbf f$ and $\mathbf f_*$,
\[
\begin{bmatrix}
    \mathbf f \\
    \mathbf f_*
\end{bmatrix}
\sim
\mathcal{N}\left(
\begin{bmatrix}
    \boldsymbol{\mu}_{\text{prior},h} \\
    \boldsymbol{\mu}_{\text{prior},h_*}
\end{bmatrix},
\begin{bmatrix}
    \boldsymbol{\Sigma}_{\mu}[h,h']
        + \mathbf K_{\eta}[k,k']
    &
    \boldsymbol{\Sigma}_{\mu}[h,h_*]
        + \mathbf K_{\eta}[k,k_*]
    \\
    \boldsymbol{\Sigma}_{\mu}[h_*,h]
        + \mathbf K_{\eta}[k_*,k]
    &
    \boldsymbol{\Sigma}_{\mu}[h_*,h_*']
        + \mathbf K_{\eta}[k_*,k_*']
\end{bmatrix}
\right).
\]
For convenience, denote this joint distribution by
\[
\begin{bmatrix}
    \mathbf f \\
    \mathbf f_*
\end{bmatrix}
\sim
\mathcal{N}\left(
\begin{bmatrix}
    \boldsymbol{\mu}_{f} \\
    \boldsymbol{\mu}_{f_*}
\end{bmatrix},
\begin{bmatrix}
    \boldsymbol{\Sigma}_{ff}
    & \boldsymbol{\Sigma}_{f f_*} \\
    \boldsymbol{\Sigma}_{f_* f}
    & \boldsymbol{\Sigma}_{f_* f_*}
\end{bmatrix}
\right).
\]
Applying the standard conditional Gaussian formula gives
\[
f_* \mid \mathbf f
\sim
\mathcal{N}\left(
    \boldsymbol{\mu}_{f_*}
    + \boldsymbol{\Sigma}_{f_* f}
      \boldsymbol{\Sigma}_{ff}^{-1}
      (\mathbf f - \boldsymbol{\mu}_{f}),
    \;
    \boldsymbol{\Sigma}_{f_* f_*}
    - \boldsymbol{\Sigma}_{f_* f}
      \boldsymbol{\Sigma}_{ff}^{-1}
      \boldsymbol{\Sigma}_{f f_*}
\right).
\]
This establishes the conditional Gaussian distribution of the latent function at the prediction locations given its values at the observed locations.

For continuous outcomes under the identity link, the observation model is
\[
    \mathbf y = \mathbf f + \boldsymbol{\varepsilon},
    \qquad
    \boldsymbol{\varepsilon}
    \sim \mathcal{N}(\mathbf 0, \sigma^2 \mathbf I_n),
\]
where the observation noise is independent of the latent function. Consequently, the joint distribution of the observed responses and latent predictions is
\[
\begin{bmatrix}
    \mathbf y \\
    \mathbf f_*
\end{bmatrix}
\sim
\mathcal{N}\left(
\begin{bmatrix}
    \boldsymbol{\mu}_{f} \\
    \boldsymbol{\mu}_{f_*}
\end{bmatrix},
\begin{bmatrix}
    \boldsymbol{\Sigma}_{ff} + \sigma^2\mathbf I_n
    & \boldsymbol{\Sigma}_{f f_*} \\
    \boldsymbol{\Sigma}_{f_* f}
    & \boldsymbol{\Sigma}_{f_* f_*}
\end{bmatrix}
\right).
\]
Defining
\[
    \boldsymbol{\Sigma}_{yy}
    = \boldsymbol{\Sigma}_{ff} + \sigma^2\mathbf I_n,
\]
the cross-covariance matrices satisfy
\[
    \operatorname{Cov}(\mathbf f_*,\mathbf y)
    = \boldsymbol{\Sigma}_{f_* f},
    \qquad
    \operatorname{Cov}(\mathbf y,\mathbf f_*)
    = \boldsymbol{\Sigma}_{f f_*}.
\]
Applying the conditional Gaussian formula again yields
\[
f_* \mid \mathbf y
\sim
\mathcal{N}\left(
    \boldsymbol{\mu}_{f_*}
    + \boldsymbol{\Sigma}_{f_* f}
      \boldsymbol{\Sigma}_{yy}^{-1}
      (\mathbf y - \boldsymbol{\mu}_{f}),
    \;
    \boldsymbol{\Sigma}_{f_* f_*}
    - \boldsymbol{\Sigma}_{f_* f}
      \boldsymbol{\Sigma}_{yy}^{-1}
      \boldsymbol{\Sigma}_{f f_*}
\right).
\]
Finally, the future continuous observation satisfies
\[
    \mathbf y_* = \mathbf f_* + \boldsymbol{\varepsilon}_*,
    \qquad
    \boldsymbol{\varepsilon}_*
    \sim \mathcal{N}(\mathbf 0, \sigma^2\mathbf I_{n_*}),
\]
where $\boldsymbol{\varepsilon}_*$ is independent of the training observations and latent function. Therefore, the predictive distribution of a future observation is
\[
\mathbf y_* \mid \mathbf y
\sim
\mathcal{N}\left(
    \mathbb E[\mathbf f_* \mid \mathbf y],
    \operatorname{Cov}(\mathbf f_* \mid \mathbf y)
    + \sigma^2\mathbf I_{n_*}
\right).
\]
For binary outcomes under a non-identity link, the latent function retains the Gaussian prior and the covariance structure derived above. However, the observed responses follow the Bernoulli likelihood
\[
    Y_i \mid f_i
    \sim \operatorname{Bernoulli}\bigl(g^{-1}(f_i)\bigr).
\]
Consequently, conditioning on observed binary responses generally does not yield a Gaussian posterior. Thus, the Gaussian conditioning result applies directly to the latent function values under the prior, whereas posterior prediction given observed binary responses must additionally account for the Bernoulli likelihood.
\end{proof}

\subsection{Proof of Theorem 2}

\begin{proof}
\setlength{\abovedisplayskip}{4pt plus 1pt minus 1pt}
\setlength{\belowdisplayskip}{4pt plus 1pt minus 1pt}
\setlength{\abovedisplayshortskip}{2pt plus 1pt minus 1pt}
\setlength{\belowdisplayshortskip}{3pt plus 1pt minus 1pt}
\setlength{\jot}{2pt}

Write the model as
\[
\mathbf f+\boldsymbol{\epsilon}
=
Z_\mu\boldsymbol{\mu}_h
+
Z_\eta\boldsymbol{\eta}_{k,\mathrm{obs}}
+
\boldsymbol{\epsilon},
\]
where $Z_\mu\in\mathbb R^{n\times H}$ and
$Z_\eta\in\mathbb R^{n\times K_{\mathrm{obs}}}$ are group incidence matrices with entries
\[
(Z_\mu)_{ih}=\mathbf 1\{h[i]=h\},
\qquad
(Z_\eta)_{ik}=\mathbf 1\{k[i]=k\},
\]
and $\boldsymbol{\epsilon}\sim
\mathcal N(\mathbf 0,\sigma^2 I_n)$.
Here, $K_{\mathrm{obs}}$ denotes the number of distinct kernel groups represented in the sample. The covariance matrix of the observed responses is therefore
\[
V
=
Z_\mu\Sigma_\mu Z_\mu^\top
+
Z_\eta K_{\eta,\mathrm{obs}}Z_\eta^\top
+
\sigma^2 I_n,
\]
where $K_{\eta,\mathrm{obs}}$ is the kernel covariance matrix restricted to the observed kernel groups.

Define
\[
U=[Z_\mu\;\;Z_\eta]
\in\mathbb R^{n\times r},
\qquad
C=\operatorname{diag}(\Sigma_\mu,K_{\eta,\mathrm{obs}}),
\qquad
r=H+K_{\mathrm{obs}}.
\]
Then
\[
V=\sigma^2 I_n+UCU^\top.
\]
Using the Woodbury identity, and assuming $\sigma^2>0$ and $C$ is invertible, we obtain
\[
\begin{aligned}
V^{-1}
&=(\sigma^2 I_n+UCU^\top)^{-1}\\
&=\sigma^{-2}I_n
-\sigma^{-2}U
\left(C^{-1}+\sigma^{-2}U^\top U\right)^{-1}
U^\top\sigma^{-2}.
\end{aligned}
\]

For dense $U$, forming $U^\top U$ costs $\mathcal O(nr^2)$, while factorizing the resulting $r\times r$ matrix $C^{-1}+\sigma^{-2}U^\top U$ using Cholesky decomposition costs $\mathcal O(r^3)$. Thus, the cost of forming and factorizing the reduced system is
\[
\begin{aligned}
\mathcal O(nr^2+r^3)
&=\mathcal O\left(r^2(n+r)\right)\\
&=\mathcal O\left(
(H+K_{\mathrm{obs}})^2
(n+H+K_{\mathrm{obs}})
\right).
\end{aligned}
\]
This bound excludes the additional computation required to construct or factorize $C$, as well as the cost of prediction across the full poststratification space. When $H+K_{\mathrm{obs}}\ll n$, the reduced system can be substantially less expensive than a direct dense factorization of $V$, which costs $\mathcal O(n^3)$. The sparsity of the incidence matrices may further reduce the computational cost.
\end{proof}

\subsection{Proof of Proposition 1}

\begin{proof}
\setlength{\abovedisplayskip}{4pt plus 1pt minus 1pt}
\setlength{\belowdisplayskip}{4pt plus 1pt minus 1pt}
\setlength{\abovedisplayshortskip}{2pt plus 1pt minus 1pt}
\setlength{\belowdisplayshortskip}{3pt plus 1pt minus 1pt}
\setlength{\jot}{2pt}

Under $\Sigma_\mu\to0$ and
$\boldsymbol{\mu}_{\mathrm{prior}}=\mathbf{0}$, the latent function reduces to $f=\eta\sim\mathcal{GP}(0,k_\eta)$. With the identity link and Gaussian noise, the observed responses satisfy
\[
\mathbf{y}\sim
\mathcal{N}(\mathbf{0},\mathbf{K}_X+\sigma^2 I_n),
\]
where $\mathbf{K}_X\in\mathbb{R}^{n\times n}$ has entries
$(\mathbf{K}_X)_{ij}=k_\eta(x_i,x_j)$.
The standard GP posterior mean at $x^*$ is therefore
\[
\mathbb{E}[f(x^*)\mid\mathbf{y}]
=
\mathbf{k}_\eta(x^*,X)^\top
(\mathbf{K}_X+\sigma^2 I_n)^{-1}\mathbf{y},
\]
where $\mathbf{k}_\eta(x^*,X)\in\mathbb{R}^n$ has entries
$\mathbf{k}_\eta(x^*,X)_i=k_\eta(x^*,x_i)$.

Let $\mathcal{H}_{k_\eta}$ denote the reproducing kernel Hilbert space (RKHS) associated with $k_\eta$. Kernel ridge regression (KRR) solves
\[
\hat f
=
\arg\min_{f\in\mathcal{H}_{k_\eta}}
\left\{
\sum_{i=1}^n (y_i-f(x_i))^2
+\lambda\|f\|_{\mathcal{H}_{k_\eta}}^2
\right\}
\]
and admits the dual representation
\[
\hat f(x^*)
=
\mathbf{k}_\eta(x^*,X)^\top
(\mathbf{K}_X+\lambda I_n)^{-1}\mathbf{y}.
\]
Setting $\lambda=\sigma^2$ makes the GP posterior mean and the KRR prediction identical, establishing the proposition.
\end{proof}

\subsection{Proof of Proposition 2}

\begin{proof}
\setlength{\abovedisplayskip}{4pt plus 1pt minus 1pt}
\setlength{\belowdisplayskip}{4pt plus 1pt minus 1pt}
\setlength{\abovedisplayshortskip}{2pt plus 1pt minus 1pt}
\setlength{\belowdisplayshortskip}{3pt plus 1pt minus 1pt}
\setlength{\jot}{2pt}

By construction,
\[
\boldsymbol{\eta}\sim
\mathcal{N}(\mathbf{0},K_\eta)
=
\mathcal{N}(\mathbf{0},\tau^2XX^\top).
\]
Introducing
\[
\mathbf{b}_1\sim\mathcal{N}(\mathbf{0},\tau^2I_p)
\]
and setting $\boldsymbol{\eta}=X\mathbf{b}_1$ yields
\[
\begin{aligned}
\operatorname{Cov}(\boldsymbol{\eta})
&=X\operatorname{Cov}(\mathbf{b}_1)X^\top\\
&=\tau^2XX^\top=K_\eta,
\qquad
\mathbb{E}[\boldsymbol{\eta}]=\mathbf{0}.
\end{aligned}
\]
Thus, the GP component has the same distribution as the random-slope component. Retaining the same group-level intercepts and observation noise gives the stated representation.
\end{proof}

\subsection{Proof of Lemma 1}

\begin{proof}
\setlength{\abovedisplayskip}{4pt plus 1pt minus 1pt}
\setlength{\belowdisplayskip}{4pt plus 1pt minus 1pt}
\setlength{\abovedisplayshortskip}{2pt plus 1pt minus 1pt}
\setlength{\belowdisplayshortskip}{3pt plus 1pt minus 1pt}
\setlength{\jot}{2pt}

By Jensen's inequality,
\[
\|\hat f_n-f_0\|_\infty
\le \mathbb E_{\Pi_n}\|f-f_0\|_\infty.
\]
Decomposing the posterior expectation over $A_n$ and $A_n^c$, we obtain
\[
\begin{aligned}
\mathbb E_{\Pi_n}\|f-f_0\|_\infty
&=
\mathbb E_{\Pi_n}\!\left[
\|f-f_0\|_\infty\mathbf 1_{A_n}
\right]\\
&\quad+
\mathbb E_{\Pi_n}\!\left[
\|f-f_0\|_\infty\mathbf 1_{A_n^c}
\right]\\
&\le M\varepsilon_n+O_{P_{f_0}}(\varepsilon_n)\\
&=O_{P_{f_0}}(\varepsilon_n).
\end{aligned}
\]
\end{proof}

\subsection{Proof of Corollary 1}

\begin{proof}
\setlength{\abovedisplayskip}{4pt plus 1pt minus 1pt}
\setlength{\belowdisplayskip}{4pt plus 1pt minus 1pt}
\setlength{\abovedisplayshortskip}{2pt plus 1pt minus 1pt}
\setlength{\belowdisplayshortskip}{3pt plus 1pt minus 1pt}
\setlength{\jot}{2pt}

Let $\phi=g^{-1}$. By the definitions of $\hat\mu_n$ and $\mu_0$, for every $x\in\mathcal X$,
\[
\hat\mu_n(x)-\mu_0(x)
=
\mathbb E_{\Pi_n}[\phi(f(x))]-\phi(f_0(x)).
\]
Applying Jensen's inequality gives
\[
\begin{aligned}
|\hat\mu_n(x)-\mu_0(x)|
&\le
\mathbb E_{\Pi_n}
\left[|\phi(f(x))-\phi(f_0(x))|\right].
\end{aligned}
\]
Define $Z_x:=|\phi(f(x))-\phi(f_0(x))|$. By nonnegativity,
\[
\sup_{x\in\mathcal X}\mathbb E_{\Pi_n}[Z_x]
\le
\mathbb E_{\Pi_n}\left[\sup_{x\in\mathcal X}Z_x\right].
\]
Taking the supremum over $x\in\mathcal X$ therefore yields
\[
\begin{aligned}
\|\hat\mu_n-\mu_0\|_\infty
&\le
\mathbb E_{\Pi_n}\left[
\sup_{x\in\mathcal X}
|\phi(f(x))-\phi(f_0(x))|
\right].
\end{aligned}
\]
By the mean value theorem, for every $x\in\mathcal X$, there exists a point $\xi_x$ between $f(x)$ and $f_0(x)$ such that
\[
|\phi(f(x))-\phi(f_0(x))|
=
|\phi'(\xi_x)|\,|f(x)-f_0(x)|.
\]
The assumed uniform derivative bound implies
\[
|\phi(f(x))-\phi(f_0(x))|
\le L|f(x)-f_0(x)|.
\]
Taking the supremum over $x$ gives
\[
\sup_{x\in\mathcal X}
|\phi(f(x))-\phi(f_0(x))|
\le L\|f-f_0\|_\infty.
\]
Consequently,
\[
\|\hat\mu_n-\mu_0\|_\infty
\le L\,\mathbb E_{\Pi_n}\|f-f_0\|_\infty.
\]
By Lemma~\ref{lem:posterior_mean_rate},
\[
\mathbb E_{\Pi_n}\|f-f_0\|_\infty
=O_{P_{f_0}}(\varepsilon_n).
\]
Since $L$ is a fixed finite constant, multiplication by $L$ preserves this stochastic order. Hence,
\[
\boxed{
\|\hat\mu_n-\mu_0\|_\infty
=O_{P_{f_0}}(\varepsilon_n).
}
\]
\end{proof}

\subsection{Proof of Lemma 2}

\begin{proof}
\setlength{\abovedisplayskip}{4pt plus 1pt minus 1pt}
\setlength{\belowdisplayskip}{4pt plus 1pt minus 1pt}
\setlength{\abovedisplayshortskip}{2pt plus 1pt minus 1pt}
\setlength{\belowdisplayshortskip}{3pt plus 1pt minus 1pt}
\setlength{\jot}{2pt}

By the uniform consistency of $\hat f_n$,
\[
\begin{aligned}
\|z-\hat z\|_\infty
&=\max_{1\le i\le n}
|f_0(x_i)-\hat f_n(x_i)|\\
&\leq \|\hat f_n-f_0\|_\infty\\
&=O_{P_{f_0}}(\varepsilon_n)
=o_{P_{f_0}}(1).
\end{aligned}
\]
Since $I_0$ is a compact subset of $I^\circ$, there exists a compact set $J$ such that
\[
I_0\subset J\subset I^\circ.
\]
Uniform consistency then implies
\[
P_{f_0}\big(\hat z_i\in J
\text{ for all }i=1,\ldots,n\big)\to 1.
\]
On this event, both $z_i$ and $\hat z_i$ belong to $J$, as does every point between them.

For each $i$, the mean value theorem gives a point $\xi_i$ between $z_i$ and $\hat z_i$ such that
\[
\phi(z_i)-\phi(\hat z_i)
=\phi'(\xi_i)(z_i-\hat z_i).
\]
Define
\[
r_{g,i}:=
\big(\phi'(\xi_i)-\phi'(\hat z_i)\big)
(z_i-\hat z_i),
\qquad
r_g:=(r_{g,1},\ldots,r_{g,n})^\top.
\]
It follows that
\[
\phi(z)-\phi(\hat z)
=A(\hat z)(z-\hat z)+r_g.
\]

Because $\phi'$ is continuous on $I$, it is uniformly continuous on the compact set $J$. Its modulus of continuity,
\[
\omega_J(a):=
\sup_{\substack{u,v\in J\\|u-v|\le a}}
|\phi'(u)-\phi'(v)|,
\]
therefore satisfies $\omega_J(a)\to 0$ as $a\downarrow 0$. On the event that all $z_i$ and $\hat z_i$ belong to $J$, we have
\[
|\xi_i-\hat z_i|
\le |z_i-\hat z_i|
\le \|z-\hat z\|_\infty.
\]
Consequently,
\[
\begin{aligned}
\|r_g\|_\infty
&\le
\|z-\hat z\|_\infty
\max_{1\le i\le n}
|\phi'(\xi_i)-\phi'(\hat z_i)|\\
&\le
\|z-\hat z\|_\infty
\omega_J(\|z-\hat z\|_\infty).
\end{aligned}
\]
Since $\|z-\hat z\|_\infty=o_{P_{f_0}}(1)$ and $\omega_J(a)\to0$ as $a\downarrow0$, it follows that
\[
\|r_g\|_\infty
=o_{P_{f_0}}\big(\|z-\hat z\|_\infty\big).
\]
This bound also holds when $\|z-\hat z\|_\infty=0$, in which case $r_g=0$. Finally, since $\|z-\hat z\|_\infty=o_{P_{f_0}}(1)$, we obtain
\[
\|r_g\|_\infty=o_{P_{f_0}}(1).
\]
\end{proof}

\subsection{Proof of Theorem 3}

\begin{proof}
\setlength{\abovedisplayskip}{4pt plus 1pt minus 1pt}
\setlength{\belowdisplayskip}{4pt plus 1pt minus 1pt}
\setlength{\abovedisplayshortskip}{2pt plus 1pt minus 1pt}
\setlength{\belowdisplayshortskip}{3pt plus 1pt minus 1pt}
\setlength{\jot}{2pt}

Write
\[
\phi\equiv g^{-1},
\qquad
B\equiv\Sigma_{f_*f}\Sigma_{yy}^{-1},
\qquad
h\equiv z-\hat z.
\]
Since $m_*(z)$ is affine in $z$,
\[
m_*(z)-m_*(\hat z)=Bh.
\]
At the cell level, we apply the inverse-link linearization principle of Lemma~\ref{lem:ginv_remainder} to the $K$-dimensional vectors $m_*(z)$ and $m_*(\hat z)$. A first-order Taylor expansion of $\phi$ componentwise at $m_*(\hat z)$ gives
\[
\phi(m_*(z))-\phi(m_*(\hat z))
=
D_*Bh+r_{g,*},
\]
where the fixed-dimensional Taylor remainder satisfies
\[
\|r_{g,*}\|_\infty
=o_{P_{f_0}}(\|h\|_\infty).
\]
Indeed, the assumed boundedness of $B$ implies
\[
\|Bh\|_\infty
\leq \|B\|_{\infty}\|h\|_\infty
=O(\|h\|_\infty),
\]
while the uniform continuity of $\phi'$ on the specified neighborhood yields the stated remainder bound.

Substituting this expansion into the definition of $T$ gives
\[
\begin{aligned}
T(z)-T(\hat z)
&=
\frac{1}{N}(N^{\mathrm{pop}})^\top D_*Bh
+
\frac{1}{N}(N^{\mathrm{pop}})^\top r_{g,*}.
\end{aligned}
\]
Because the population cell counts are nonnegative and sum to $N$,
\[
\begin{aligned}
\left|
\frac{1}{N}(N^{\mathrm{pop}})^\top r_{g,*}
\right|
&\leq \|r_{g,*}\|_\infty\\
&=o_{P_{f_0}}(\|h\|_\infty).
\end{aligned}
\]
Consequently,
\[
T(z)-T(\hat z)
=
\nabla_z T(\hat z)^\top h
+
o_{P_{f_0}}(\|h\|_\infty),
\]
where the chain rule gives
\[
\begin{aligned}
\nabla_z T(\hat z)
&=
\frac{1}{N}B^\top D_*N^{\mathrm{pop}}\\
&=
\frac{1}{N}
\Sigma_{yy}^{-\top}
\Sigma_{ff_*}D_*N^{\mathrm{pop}}.
\end{aligned}
\tag{1}
\]

We next identify the local equivalent weights by matching the first-order directional derivatives of $T$ and the weighted comparison quantity. For an arbitrary direction $h\in\mathbb R^n$, consider the path
\[
z(t)=\hat z+th.
\]
By the preceding expansion and the chain rule,
\[
\left.\frac{d}{dt}T(z(t))\right|_{t=0}
=
\nabla_z T(\hat z)^\top h.
\]
For
\[
S_w(z)=\frac{1}{n}w^\top\phi(z),
\]
the chain rule similarly yields
\[
\left.\frac{d}{dt}S_w(z(t))\right|_{t=0}
=
\frac{1}{n}w^\top A(\hat z)h.
\]
By the definition of local equivalent weights, these derivatives must agree for every direction $h$. Thus,
\[
\nabla_z T(\hat z)^\top h
=
\frac{1}{n}w^\top A(\hat z)h
\qquad\text{for every }h\in\mathbb R^n.
\]
Equating the coefficient vectors gives
\[
\nabla_z T(\hat z)
=
\frac{1}{n}A(\hat z)^\top w.
\]
Since $A(\hat z)$ is diagonal and nonsingular,
\[
w=nA(\hat z)^{-1}\nabla_z T(\hat z).
\tag{2}
\]
This establishes the weight formula and the uniqueness of the first-order coefficient vector.

It remains to verify the local survey-weight identity. By Lemma~\ref{lem:ginv_remainder},
\[
\phi(z)-\phi(\hat z)
=
A(\hat z)h+r_g,
\qquad
\|r_g\|_\infty
=o_{P_{f_0}}(\|h\|_\infty).
\]
Since $\|h\|_\infty=o_{P_{f_0}}(1)$, we also have
\[
\|r_g\|_\infty=o_{P_{f_0}}(1).
\]
Multiplying the expansion by $w^\top/n$ yields
\[
\begin{aligned}
\frac{1}{n}w^\top
\{\phi(z)-\phi(\hat z)\}
&=
\frac{1}{n}w^\top A(\hat z)h
+\frac{1}{n}w^\top r_g.
\end{aligned}
\]
By the assumed weight-norm condition, the weighted remainder satisfies
\[
\begin{aligned}
\left|\frac{1}{n}w^\top r_g\right|
&\leq
\frac{\|w\|_1}{n}\|r_g\|_\infty\\
&=
o_{P_{f_0}}(\|h\|_\infty).
\end{aligned}
\]
Moreover, equation (2) implies
\[
\frac{1}{n}w^\top A(\hat z)h
=
\nabla_z T(\hat z)^\top h.
\]
Combining this equality with the expansion in (1) gives
\[
\begin{aligned}
T(z)-T(\hat z)
&=
\frac{1}{n}w^\top
\{\phi(z)-\phi(\hat z)\}
+
o_{P_{f_0}}(\|z-\hat z\|_\infty),
\end{aligned}
\]
which proves \eqref{eq:caseB_identity}. Finally, substituting (1) into (2) gives the stated closed-form expression for the local equivalent unit weights.
\end{proof}

\doublespacing

\section{Data and Code Availability}

Selected R code used to reproduce the simulations and empirical analyses
in this paper is publicly available at \url{https://github.com/huneel777/logit_equivalent_weight/tree/main/hgp_weight_code}.

The empirical application uses data from the Future of Families and
Child Wellbeing Study (FFCWS). Public-use data and documentation are
available at
\href{https://ffcws.princeton.edu/documentation}
{https://ffcws.princeton.edu/documentation}.

\begingroup
\setlength{\bibsep}{3pt}

\endgroup

\end{document}